\newcommand{\add}[1]{{\color{black} #1}}
\newcommand{\floor}[1]{\lfloor #1 \rfloor}
\newcommand{\bx}{\bm{x}}
\newcommand{\by}{\bm{y}}
\newcommand{\bxe}{\bm{x}_e}
\newcommand{\bye}{\bm{y}_e}
\newcommand{\bxte}{\bm{x}_{\tilde{e}}}
\newcommand{\byte}{\bm{y}_{\tilde{e}}}
\newcommand{\bR}{\mathbb{R}}
\crefname{hypothesis}{Hypothesis}{Hypotheses}
\title{A density description of a bounded-confidence model of opinion dynamics on hypergraphs}
\author{Weiqi Chu\footnote{Department of Mathematics, University of California, Los Angeles. }
 \and Mason A. Porter\footnotemark[1]\,\,\,\footnote{Santa Fe Institute}
 }
\begin{document}

\maketitle

\begin{abstract}
Social interactions often occur between three or more agents simultaneously. Examining opinion dynamics on hypergraphs allows one to study the effect of such polyadic interactions on the opinions of agents. In this paper, we consider a bounded-confidence model (BCM), in which opinions take continuous values and interacting agents comprise their opinions if they are close enough to each other. We study a density description of a Deffuant--Weisbuch BCM on hypergraphs. We derive a rate equation for the mean-field opinion density as the number of agents becomes infinite, and we prove that this rate equation yields a probability density that converges to noninteracting opinion clusters. Using numerical simulations, we examine bifurcations of the density-based BCM's steady-state opinion clusters and demonstrate that the agent-based BCM converges to the density description of the BCM as the number of agents becomes infinite.
\end{abstract}

\begin{keywords}
  opinion dynamics, bounded-confidence models, hypergraphs, mean-field theory, probability-density dynamics, Deffuant--Weisbuch model
\end{keywords}

\begin{AMS}
  91D30, 05C65, 45J05
\end{AMS}



\section{Introduction}
\add{People spread and change their opinions through their daily social interactions \cite{jackson2010social}.
Mathematical models of opinion dynamics give quantitative approaches to study how the opinions of people and other agents evolve as dynamical processes on networks \cite{noor2020,porter2016}. 
Such models have given insight into a variety of topics, including decision-making \cite{urena2019review}, opinion formation \cite{fiorina2008political,jalili2013social}, and rumor spreading \cite{friedkin2017truth}.}

Models of opinion dynamics can have discrete-valued opinions or continuous-valued opinions \cite{noor2020}. Examples of the former include voter models \cite{redner2019}; examples of the latter include the DeGroot model \cite{degroot1974reaching}, the Friedkin--Johnsen model \cite{friedkin1990social}, and bounded-confidence models (BCMs) \cite{lorenz2007continuous}. In a BCM, the agents that interact with each other compromise their opinions by some amount if and only if the difference between their opinions is less than some threshold, which is known as the confidence bound. 
The opinions of the agents in a BCM can take continuous real values from a finite interval, the entire real line, or a higher-dimensional space.
The compromise mechanism in a BCM is motivated by the idea of ``selective exposure" from psychology; people tend to favor views that are close to their beliefs and to avoid cognitive dissonance \cite{frey1986recent,sears1967selective}.
Personalized recommendations on online platforms also reinforce selective exposure by suggesting content (e.g., YouTube videos) that is based on prior consumed content \cite{knobloch2005impact,kakiuchi2018influence,hossein2020}.

In the last two decades, there have been many studies of BCMs, which have built on pioneering research on the Deffuant--Weisbuch (DW) \cite{deffuant2000mixing,weisbuch2002meet} and Hegselmann--Krause (HK) \cite{hegselmann2002opinion} models. In both the DW and HK models, agents adjust their opinions at discrete time steps when their opinions are sufficiently close to their neighbors. The DW model has asynchronous opinion updates in which one pair of adjacent nodes interacts; these nodes update their opinions if they are sufficiently close to each other.
By contrast, in the HK model, all potential opinion updates of the nodes occur simultaneously (i.e., opinion updates are synchronous). 
BCMs have been generalized in a variety of ways, such as by incorporating heterogeneous confidence bounds or heterogeneous compromise tendencies \cite{pluchino2006compromise,weisbuch2002meet}, randomness in opinion updates in the form of endogenous opinion evolution \cite{baccelli2017pairwise}, and special nodes (such as media nodes) whose update rules are different from those of other nodes \cite{brooks2020model}.

All of the above BCMs encode interactions between agents in the form of dyadic (i.e., pairwise) relationships. 
However, many social interactions are polyadic (i.e., they involve three or more agents) \cite{durlauf2010social}. 
For example, people can discuss their opinions through group texts, live conversations in a video conference call, and in small in-person meetings. 
\add{Polyadic interactions occur both in humans \cite{alvarez2021evolutionary} and in other animals \cite{shemesh2013high,letten2019mechanistic}.
In a recent study~\cite{lambiotte2019networks}, Lambiotte et al. illustrated that pairwise interactions cannot explain the complex non-Markovian dynamics (such as directional passenger flows) in the London Underground transportation system (i.e., ``The Tube'').}
One way to incorporate polyadic interactions is by studying dynamical processes on hypergraphs \cite{battiston2020networks,bick2021higher,de2020social}.
The edges of a graph connect only two nodes (or connect a single node to itself, in the case of a self-edge), whereas the hyperedges of a hypergraph can connect any number of nodes to each other \cite{ouvrard2020hypergraphs,chodrow2020configuration}; they thereby allow one to study collective interactions between arbitrarily many agents.
Two recent papers generalized BCMs to hypergraphs \cite{hickok2022bounded,schawe2022higher}. In one of them~\cite{hickok2022bounded}, Hickok et al. showed that polyadic interactions in a BCM can enhance the convergence to opinion consensus and that BCMs on hypergraphs can possess qualitative dynamics, such as opinion jumps, that do not occur for BCMs on ordinary graphs.

In the study of BCMs, one compelling question is whether the opinions of the nodes of a network eventually reach a consensus state (with one major cluster of opinions), a polarized state (with two major clusters), or a fragmented state (with three or more major clusters) \cite{li2013consensus,meng2018opinion}. It is also important to consider how long it takes to reach a steady state. Given a random initial configuration of opinions, one can examine a BCM as a multi-particle system and perform Monte Carlo simulations to determine a steady-state opinion distribution \cite{pineda2009noisy,fennell2021generalized}. However, direct simulations are computationally expensive when the number of agents is large. For example, numerical observations have suggested that the convergence time of the DW model on cycle graphs grows approximately exponentially with respect to the number of nodes \cite{meng2018opinion}. Additionally, when considering random graphs or networks with random initial opinions, one needs to perform simulations with many realizations to mitigate sampling errors and to estimate the expectations of quantities by computing sample means \cite{fu2014opinion,huang2018effects}.

To study BCM dynamics in a system that involves randomness, an alternative approach to direct simulations is to model the probability density of opinion states using an integro-differential equation \cite{fortunato2005vector,carro2013role,ben2003bifurcations}. 
Ben-Naim et al.~\cite{ben2003bifurcations} modeled the probability density of opinions with the equation
\begin{equation}     \label{eq: node2density}
    \frac{\partial}{\partial t} P(x,t) ={\int\!\!\!\int}_{\!\!\!_{|x_1 - x_2| < 1}} \mkern-50mu P(x_1,t)P(x_2,t) \left[ \delta\left(x-\frac{x_1 + x_2}{2}\right)-\delta(x - x_1) \right] \mathrm{d}x_1 \, \mathrm{d}x_2\,.
\end{equation}
\add{In this density-based BCM, the opinions of the nodes lie in a one-dimensional (1D) space, and $P(x, t)\,\mathrm{d}x$ denotes the fraction of agents with opinions in the interval $(x, x + \mathrm{d}x)$ at time $t$. 
If the opinion difference between two agents is less than $1$, they compromise their opinions to precisely the middle of their current opinions. This middle-point interaction rule leads to a gain term of the opinion density at the compromise opinion $(x_1 + x_2)/2$ and loss terms at the original opinions $x_1$ and $x_2$.}

In the present paper, we study a density-based BCM on hypergraphs.
We consider a DW model on a hypergraph in a discrete-time setting and obtain a continuous-time rate equation in a mean-field limit as the network size (i.e., the number of agents in the network) becomes infinite. 
The rate equation \eqref{eq: node2density} of Ben-Naim et al.~\cite{ben2003bifurcations} is a special case of our model. We prove that the solution of our rate equation is a probability density and that it converges to noninteracting, isolated opinion clusters at an exponential rate. 
These results are consistent with both steady-state behavior and convergence properties in agent-based models~\cite{lorenz2005stabilization}.
As a case study, we use a special type of hypergraph in which each hyperedge has three nodes --- such hypergraphs occur, for example, in the study of folksonomies \cite{ghoshal2009random} --- and numerically examine the steady-state distributions of opinion clusters for different discordance functions, which are analogous to the confidence bounds of dyadic BCMs. 
We observe numerically for both bounded and unbounded opinion distributions that opinion clusters undergo a periodic sequence of bifurcations as we increase the variance of the initial opinion distribution. 
We also illustrate through Monte Carlo simulations that the agent-based DW model converges to our density-based DW model as we increase the number of agents.

Our paper proceeds as follows. In Section \ref{sec: density_evolution}, we review the agent-based DW model on hypergraphs and derive the rate equation of a single-agent density in the mean-field limit. We also prove several properties of the solution of our density-based DW model and of its steady states. 
In Section \ref{sec: numerics}, we numerically study steady-state opinion distributions with different confidence bounds and different initial distributions, and we compare them with the results of agent-based simulations. We conclude in Section \ref{sec: discussion}. We give proofs of several results in Appendix \ref{app}.


\section{Density evolution of a DW model on a hypergraph} \label{sec: density_evolution}

Consider an unweighted and undirected hypergraph $H = (V,E)$, where $V=\{1,\ldots,N\}$ is the set of nodes and $E$ is the set of hyperedges. Each hyperedge $e$ is a subset of $V$ and represents a relationship that is shared by all nodes in $e$. 

For a given hyperedge $e$, we decompose the opinion state into two parts: $\bx = (\bxe, \bxte)$, where the first part $\bxe = \{x_i\}_{i\in e}$ gives the opinion values of the nodes in the hyperedge $e$ and the second part $\bxte = \{x_i\}_{i\notin e}$ gives the opinion values of the other nodes (i.e., the nodes in the complement set).

Analogous to the confidence bounds of dyadic opinion models, we use a discordance function $d_p: \bR^{|e|}\rightarrow \bR_{\ge 0}$ to measure the heterogeneity of the opinions that are associated with a hyperedge $e$. We define the discordance function as an averaged deviation from a group mean in terms of the $L_p$ norm:
\begin{equation}     \label{eq: discordancep}
    d_p(\bxe) = \alpha_p\left[ \frac{1}{|e|}\sum_{i\in e}|x_i-\overline{\bx}_e|^p\right]^{1/p}\,,
\end{equation}
where $\overline{\bx}_e=\frac{1}{|e|} \sum_{i\in e}x_i$ is the group mean. The factor $1/|e|$ alleviates the disadvantage of hyperedges with more nodes (i.e., ``larger" hyperedges).  
We introduce scaling constants $\alpha_p$ to reduce the discrepancy between different choices of the parameter $p$. In Section \ref{sec: numerics}, we discuss the selection of $\alpha_p$. 
Other choices of discordance functions include the $L_\infty$ norm \cite{schawe2022higher} and the sample variance \cite{hickok2022bounded}.


\subsection{Rate equation for the discrete-time dynamics and its continuous-time extension}
In the DW model on a hypergraph, at each time step, we choose one hyperedge $e$ randomly with probability $p_e$ and update the opinions of its nodes to the mean opinion in $e$ if and only if the discordance is less than a threshold. 
That is,
\begin{equation}
    x_i(n+1) = 
    \begin{cases}
      \overline{\bx}_e(n) & \text{if $i\in e$ \,and\, $d_p(\bxe(n))<c$} \\
      x_i(n) & \text{otherwise}\,,
    \end{cases}
\label{eq: agent-iterations}
\end{equation} 
where $c > 0$ is the confidence bound. The sequence $\{\bx(n)\}$ forms a discrete-time Markov process. 
We state the precise dynamics of the probability density of $\{\bx(n)\}$ in Theorem \ref{thm: Pdiscrete}, which we prove in Appendix \ref{sec: thmPdiscrete}.

\begin{theorem}
\label{thm: Pdiscrete}
The DW model \eqref{eq: agent-iterations} induces a discrete-time Markov chain $\bx(n)$, with $n=0,1,\ldots$, in the continuous state space $\bR$.
Let $P(\bx,n)$ be the probability density of $\bx(n)$. The density $P(\bx,n)$ satisfies the relation
\begin{equation}
    P(\bx,n+1)-P(\bx,n) = \sum_{e\in E} p_e\int_{d_p(\by_e)<c} \mkern-40mu P(\bye,\bxte,n)\left[\delta(\bxe-\overline{\by}_e)-\delta(\bxe-\bye)\right] \mathrm{d}\bye\,.
    \label{eq: discrete-time}
\end{equation}
\end{theorem}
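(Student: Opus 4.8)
The plan is to derive \eqref{eq: discrete-time} directly from the one-step Markov transition kernel of \eqref{eq: agent-iterations} by the law of total probability (the Chapman--Kolmogorov relation), followed by a short rearrangement that produces the two delta terms. First I would fix a hyperedge $e$ and record the transition conditioned on $e$ being selected. From a state $\by$, the rule \eqref{eq: agent-iterations} is deterministic: if $d_p(\bye)<c$, every coordinate $y_i$ with $i\in e$ is replaced by the scalar group mean $\overline{\by}_e$ and the complement $\byte$ is left untouched, whereas if $d_p(\bye)\ge c$ nothing moves. Writing $\delta(\bxe-\overline{\by}_e):=\prod_{i\in e}\delta(x_i-\overline{\by}_e)$ for the product of deltas that collapses the $e$-coordinates onto the diagonal value $\overline{\by}_e$, the conditional kernel is
\[
K_e(\bx\mid\by)=\mathbb{1}[d_p(\bye)<c]\,\delta(\bxe-\overline{\by}_e)\,\delta(\bxte-\byte)+\mathbb{1}[d_p(\bye)\ge c]\,\delta(\bx-\by)\,.
\]
Averaging over the hyperedge choice gives the full kernel $K(\bx\mid\by)=\sum_{e\in E}p_e\,K_e(\bx\mid\by)$, and Chapman--Kolmogorov yields $P(\bx,n+1)=\int K(\bx\mid\by)\,P(\by,n)\,\mathrm{d}\by$.

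Next I would evaluate the two contributions separately. In the update term I split $\mathrm{d}\by=\mathrm{d}\bye\,\mathrm{d}\byte$ and integrate $\delta(\bxte-\byte)$ against $\byte$, which fixes $\byte=\bxte$ and leaves $\sum_{e}p_e\int_{d_p(\bye)<c}\delta(\bxe-\overline{\by}_e)\,P(\bye,\bxte,n)\,\mathrm{d}\bye$. In the no-update term, integrating $\delta(\bx-\by)$ sets $\by=\bx$ and produces $\sum_{e}p_e\,\mathbb{1}[d_p(\bxe)\ge c]\,P(\bx,n)$. Using $\sum_{e}p_e=1$ together with $\mathbb{1}[d_p(\bxe)\ge c]=1-\mathbb{1}[d_p(\bxe)<c]$, this no-update term equals $P(\bx,n)-\sum_{e}p_e\,\mathbb{1}[d_p(\bxe)<c]\,P(\bx,n)$. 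The final move is to recognize the subtracted piece as an integral against a delta, via $\mathbb{1}[d_p(\bxe)<c]\,P(\bx,n)=\int_{d_p(\bye)<c}\delta(\bxe-\bye)\,P(\bye,\bxte,n)\,\mathrm{d}\bye$; subtracting $P(\bx,n)$ from both sides then collects the remaining terms into exactly \eqref{eq: discrete-time}.

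The step I expect to be the real obstacle is justifying this formal delta-function calculus, because the update term is supported on the lower-dimensional diagonal $\{x_i=\overline{\by}_e \text{ for } i\in e\}$, so $P(\cdot,n+1)$ is genuinely a distribution (it develops singular components on these diagonals) rather than an ordinary $L^1$ function, and products such as $\delta(\bxe-\overline{\by}_e)$ must be handled with care. To make the argument rigorous I would instead verify \eqref{eq: discrete-time} weakly: for an arbitrary bounded continuous test function $\phi$, I would compute $\mathbb{E}[\phi(\bx(n+1))]$ by conditioning on $\bx(n)\sim P(\cdot,n)$ and on the selected hyperedge, so that a selected hyperedge $e$ contributes $\phi$ evaluated at the unchanged state when $d_p(\bxe)\ge c$ and at the collapsed state (with $e$-coordinates set to $\overline{\bx}_e$) when $d_p(\bxe)<c$. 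Expanding this expectation and comparing it with $\int\phi\,P(\cdot,n)\,\mathrm{d}\bx$ plus the pairing of $\phi$ against the claimed right-hand side shows, after integrating out the delta factors coordinate by coordinate, that the two agree for every such $\phi$. Since $\phi$ is arbitrary, \eqref{eq: discrete-time} then holds as an identity of distributions, which is the precise sense in which the rate equation is to be read.
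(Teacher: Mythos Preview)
Your proposal is correct and follows essentially the same route as the paper: condition on the selected hyperedge, write the one-step transition as the sum of an ``update'' piece supported on $\{d_p(\bye)<c\}$ and a ``no-update'' piece supported on its complement, integrate out the deltas, and subtract $P(\bx,n)$ to obtain the two-term form. The paper's proof carries out exactly this formal delta-function computation without the additional discussion of weak formulation and singular support that you provide; your test-function reformulation is a welcome addition but does not alter the underlying argument.
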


\add{Theorem \ref{thm: Pdiscrete} gives an exact density description of the agent-based DW model \eqref{eq: agent-iterations} on a hypergraph and provides an approach to simulate agent-based dynamics in terms of associated evolutions of probabilities.}
Solving \eqref{eq: discrete-time} with the initial condition $P(\bx,0)$ yields the probability density function of $x(n)$.
If the agent-based DW model has a random initial state $\bx(0)$, then $P(\bx,0)$ is the probability density of the initial state; if the model has a deterministic initial state $\bx(0)$, then $P(\bx,0)$ is an empirical Dirac measure that is associated with the initial state.

Solving equation \eqref{eq: discrete-time} numerically involves integration over a high-dimensional space. One can decrease the dimension of the integration space in \eqref{eq: discrete-time} by one by explicitly integrating the delta functions, but it remains computationally expensive to simulate the dynamics of the joint distribution of $\bx(t)$, which lives in a space whose dimension is the network size $N$ (i.e., the number of nodes). 
Therefore, we reduce the computational expense by employing a mean-field approximation of the rate equation (see Section \ref{rate}). The reduced rate equation only requires the solution of a single-agent density function, which is one-dimensional.

It is also convenient to treat time as a continuous variable. Some models assume that opinions change continuously with time \cite{sahasrabuddhe2021modelling} and do not allow sudden opinion changes at any instant of time. However, we consider a probability density $f(\bx,t)$ as a continuous extension that is induced by the discrete-time density $P(\bx,n)$. To do this, we write
\begin{equation}\label{defin}
    f(\bx,t) = P(\bx,\floor{\gamma t})\,, \quad t \ge 0\,,
\end{equation}
where $\floor{\cdot}$ is the floor function and the constant $\gamma$ indicates the speed of opinion updates. 
The induced model, which describes the continuous-time probability density $f(\bx,t)$, has $\gamma$ updates of opinion states in the time interval $[t,t+1)$.

With the definition \eqref{defin}, the probability density $f(\bx,t)$ is not continuous with respect to $t$. Specifically, with $\tau = 1/\gamma$, a direct computation yields
\begin{equation}     
\label{eq: fequation}
    \frac{f(\bx,t+{\tau})-f(\bx,t)}{{\tau}} = \gamma \sum_{e\in E} p_e\int_{d_p(\by_e)<c} \mkern-40mu f(\bye,\bxte,t)\left[\delta(\bxe-\overline{\by}_e)-\delta(\bxe-\bye)\right] \mathrm{d}\bye\,.
\end{equation}
If we take the limit $\tau\rightarrow 0$ in \eqref{eq: fequation}, the left-hand side yields the time derivative of the opinion density $f(\bx,t)$, but the right-hand side becomes infinite because $\gamma \rightarrow \infty$. 
When $\tau \rightarrow 0$ (i.e., $\gamma \rightarrow \infty$), agents interact with each other at an infinitely fast rate. If a network has a finite number of agents, it instantaneously reaches a steady state.
However, the right-hand side of \eqref{eq: fequation} becomes well-defined in the limit $\tau \rightarrow 0$ (with an appropriate scaling parameter $\gamma$) when the number of agents in a network simultaneously tends to infinity. 
We thus consider a mean-field limit in which the network size becomes infinite.


\subsection{Rate equation in a mean-field limit}\label{rate}

We introduce a one-point density $g(a,t)$ to examine the opinion distribution of a population of equivalent agents. 
For $a\in\bR$, we define the one-point density
\begin{equation}
    g(a,t) = \frac1N \sum_{i=1}^N \int f(\bx,t) \delta(a-x_i)\,\mathrm{d}\bx\,.
    \label{eq: one-agent-density}
\end{equation}
We derive a governing equation for $g(a,t)$ in Theorem \ref{thm: gequation}.

For notational simplicity, we use finite-difference notation and define
\begin{equation}
    \left[{\tau}\right]h(\cdot,t) := \frac{h(\cdot, t + \tau) - h(\cdot,t)}{{\tau}} \,.
\end{equation}
We use $f(\cdot,t)$ with an appropriate argument to denote the marginal densities of $f(\bx,t)$. For example, $f(\bxe,t)$ represents the marginal density after integrating out the variables $\bxte$ that are associated with the complement set. That is, $f(\bxe,t) = \int f(\bx,t)\,\mathrm{d}\bxte$.

\begin{theorem}
\label{thm: gequation}
Assume for all hyperedges $e \in E$ and all times $t \ge 0$ that 
\begin{equation}
    f(\bxe,t) = \Pi_{i\in e}f(x_i,t)\,.
    \label{eq: independence}
\end{equation}
It then follows that the one-point density $g(a,t)$ in equation \eqref{eq: one-agent-density} satisfies
\begin{equation}
    [{\tau}]{g}(a,t) = \frac{\gamma}{N} \sum_{e\in E} p_e \int_{d_p(\by_e)<c} \mkern-30mu \Pi_{i\in e} g(y_i,t) \left[|e|\delta(a-\overline{\by}_e)-\sum_{i\in e}\delta(a-y_i)\right] \mathrm{d}\bye\,,
    \label{eq: one-agent}
\end{equation}
\end{theorem}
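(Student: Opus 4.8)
The plan is to obtain \eqref{eq: one-agent} by applying the finite-difference operator $[\tau]$ directly to the definition \eqref{eq: one-agent-density} of the one-point density and then substituting the rate equation \eqref{eq: fequation} for the full density $f$. Since $[\tau]$ is linear and acts only on the $t$-dependence, I would first write
\begin{equation*}
[\tau] g(a,t) = \frac1N \sum_{i=1}^N \int [\tau] f(\bx,t)\, \delta(a-x_i)\,\mathrm{d}\bx\,,
\end{equation*}
and then insert \eqref{eq: fequation}. Interchanging the finite sum over nodes with the sum over hyperedges and with the integrals, the computation reduces to evaluating, for each hyperedge $e$ and each node $i$, the integral of $f(\bye,\bxte,t)\left[\delta(\bxe-\overline{\by}_e)-\delta(\bxe-\bye)\right]\delta(a-x_i)$ against $\mathrm{d}\bye\,\mathrm{d}\bx$ over the confidence region $\{d_p(\by_e)<c\}$.

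The key structural observation is that the contribution splits according to whether $i \in e$ or $i \notin e$. For $i \notin e$, the variable $x_i$ belongs to the complement block $\bxte$, so the factor $\delta(a-x_i)$ does not involve $\bxe$; integrating the bracket $\left[\delta(\bxe-\overline{\by}_e)-\delta(\bxe-\bye)\right]$ over $\bxe \in \bR^{|e|}$ yields $1-1=0$, and the entire term vanishes. Thus only the nodes inside a hyperedge contribute, which is exactly why the inner sum in \eqref{eq: one-agent} runs over $i \in e$.

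For $i \in e$, I would integrate out the complement block $\bxte$ first: since only $f(\bye,\bxte,t)$ depends on $\bxte$, this produces the edge marginal $f(\bye,t)$, which the independence hypothesis \eqref{eq: independence} factorizes as $\Pi_{j\in e} f(y_j,t)$. Writing the vector delta functions as products $\delta(\bxe-\overline{\by}_e)=\Pi_{j\in e}\delta(x_j-\overline{\by}_e)$ and $\delta(\bxe-\bye)=\Pi_{j\in e}\delta(x_j-y_j)$, the remaining integration over $\bxe$ against $\delta(a-x_i)$ collapses the first product to $\delta(a-\overline{\by}_e)$ and the second to $\delta(a-y_i)$. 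Summing over $i \in e$ then turns $|e|$ identical copies of $\delta(a-\overline{\by}_e)$ into the factor $|e|\delta(a-\overline{\by}_e)$ and the remaining terms into $\sum_{i\in e}\delta(a-y_i)$, reproducing the bracket in \eqref{eq: one-agent}. Finally, because the agents are equivalent, every single-node marginal $f(y_j,t)$ coincides with the one-point density $g(y_j,t)$, so $\Pi_{j\in e}f(y_j,t)=\Pi_{i\in e}g(y_i,t)$, which completes the derivation.

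The main obstacle is not conceptual but a matter of carrying out the distributional bookkeeping cleanly: one must treat the products of Dirac deltas carefully when integrating against $\delta(a-x_i)$, and justify the interchange of the node sum, the edge sum, and the integrals. The only nonelementary step is the passage from the specific edge marginals to the common one-point density $g$, which relies on the equivalent-agents assumption together with the independence hypothesis \eqref{eq: independence}; I would state explicitly that these are precisely the two assumptions that permit replacing $f(\bye,t)$ by $\Pi_{i\in e}g(y_i,t)$.
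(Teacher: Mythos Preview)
Your proposal is correct and follows essentially the same computation as the paper's own proof: apply $[\tau]$ to the definition of $g$, insert \eqref{eq: fequation}, observe that the terms with $i\notin e$ vanish, integrate out $\bxte$ to obtain the edge marginal $f(\bye,t)$, collapse the $\bxe$-integrals against the Dirac deltas, and finally replace $f(\bye,t)$ by $\Pi_{i\in e} g(y_i,t)$. If anything, you are more explicit than the paper in flagging that the last replacement requires not only the independence hypothesis \eqref{eq: independence} but also the equivalent-agents (permutation-invariance) assumption so that each one-node marginal $f(y_i,t)$ coincides with $g(y_i,t)$.
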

where $\gamma$ is the updating rate in \eqref{defin} and $\tau=1/\gamma$.
\medskip
\medskip

We prove Theorem \ref{thm: gequation} in Appendix \ref{sec: thmgequation}. It requires an independence assumption that the opinions of the nodes in a hyperedge $e$ are independent for $t \ge 0$.
In general, this independence assumption does not hold for a network with finitely many agents, but the joint distribution converges to the product of one-agent marginal densities as the number of agents becomes infinite. 
For a DW model on a dyadic graph, G\'{o}mez-Serreno et al.~\cite{gomez2012bounded} proved rigorously (see their Theorem $4.3$) that (1) the law (which is the probability measure that is associated with a random variable or a stochastic process) of opinion processes $\bx(t)$ converges to the law of independent and identically distributed (i.i.d.) processes as the number of agents becomes infinite and that (2) the law of an individual limit process is the unique solution of an integro-differential equation of Kac type. We extend their results to networks with polyadic interactions in Conjecture \ref{thm: conj}.

\begin{conjecture}
\label{thm: conj}
Suppose that $f(\bx,t)$ in \eqref{eq: fequation} is permutation-invariant with respect to $\bx$. (In other words, $f(\bx,t)$ remains the same when we switch any pair of entries of $\bx$.)
If all hyperedges $e \in E$ satisfy $|e| \le M$, then it follows for $t \le T$ that
\begin{equation}
    \left|f(\bxe,t) - \Pi_{i\in e}f(x_i,t)\right| \le \frac{C(M,T)}{N} \,,
    \label{eq: error}
\end{equation}
where $C(M,T)$ is a constant that depends only on $M$ and $T$ and the difference in the absolute value $|\cdot|$ is in the weak sense of using test functions.
\end{conjecture}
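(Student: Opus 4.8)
The plan is to establish this quantitative propagation of chaos by adapting the coupling argument of G\'omez-Serrano et al.~\cite{gomez2012bounded} to polyadic interactions, using the exact density evolution of Theorem~\ref{thm: Pdiscrete} as the driving relation. First I would introduce the \emph{nonlinear limiting process}: a single tagged opinion $\bar{x}(t)$ whose one-point law equals the solution $g(\cdot,t)$ of the mean-field equation \eqref{eq: one-agent}, where at each update the tagged agent interacts with $|e|-1$ independent copies sampled from the current law, accepting the move to the group mean only when $d_p(\bye)<c$. Taking $N$ independent copies $\bar{x}_1(t),\ldots,\bar{x}_N(t)$ of this process, coupled to the finite system through a common initial condition and a common sequence of selected hyperedges, the product $\prod_{i\in e}f(x_i,t)$ is exactly the joint law of the family $(\bar{x}_i)_{i\in e}$. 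Thus the weak bound \eqref{eq: error} reduces to controlling $\mathbb{E}\,\phi(\bxe(t))-\mathbb{E}\,\phi((\bar{x}_i)_{i\in e}(t))$ for test functions $\phi$, i.e.\ to a coupling estimate on a fixed set of $|e|<M$ tagged agents.

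Next I would exploit exchangeability. Under the permutation-invariance hypothesis, the correlation $f(\bxe,t)-\prod_{i\in e}f(x_i,t)$ depends on $e$ only through its cardinality, so I would track, for each $k<M$, the size-$k$ correlation $R_k(t)$ tested weakly against $\phi$. Differencing the exact relation \eqref{eq: discrete-time} against the product ansatz \eqref{eq: independence} yields the one-step increment $[\tau]R_k$. Two kinds of contributions appear: updates on a hyperedge $e'$ disjoint from the tracked set act identically on the joint law and on the product, so they cancel at leading order and leave only a lower-order remainder; updates on a hyperedge $e'$ overlapping the tracked set are the genuine source of correlation, since they replace the independent opinions in the overlap by a common mean and thereby entangle the tracked agents with up to $M-1$ fresh agents.

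This produces a finite hierarchy: the increment of $R_k$ is a linear combination of correlations $R_j$ with $k\le j\le k+M-1$, plus a source term of order $1/N^2$ arising from the newly created overlaps, with the whole right-hand side carrying the mean-field prefactor $\gamma/N$ from \eqref{eq: one-agent}. Choosing $\gamma$ proportional to $N$ so that each agent participates in $O(1)$ interactions per unit time, there are $O(NT)$ update steps on $[0,T]$; each step multiplies the vector $(R_k)_{k<M}$ by a matrix of the form $I+A/N$ and adds a source of size $O(1/N^2)$. A discrete Gr\"onwall inequality then gives $\max_{k}\sup_{t\le T}|R_k(t)|\le C(M,T)/N$ with $C(M,T)\sim \|A\|\,T\,e^{\|A\|T}$, which is precisely \eqref{eq: error}; the hyperedge-size bound $|e|<M$ is exactly what makes the hierarchy close at finite order and keeps $\|A\|$, hence $C(M,T)$, finite and independent of $N$.

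The main obstacle is controlling this hierarchy honestly in the polyadic setting. Unlike the dyadic case of \cite{gomez2012bounded}, where each event couples exactly two agents and the recursion is essentially scalar, here a single update can simultaneously entangle up to $M$ agents, so the increment of a $k$-correlation feeds on correlations of every order up to $k+M-1$; verifying that the disjoint-hyperedge contributions cancel to the claimed order, that the cross terms carry the correct power of $1/N$, and that $A$ is bounded uniformly in $N$ requires careful combinatorial bookkeeping of how many hyperedges of each overlap pattern meet the tracked set. A secondary technical point is the nonsmoothness of the acceptance indicator $\mathbb{1}\{d_p(\bye)<c\}$ together with the Dirac factors in \eqref{eq: discrete-time}: these must be handled in the weak (distributional) sense, which is why \eqref{eq: error} is stated against test functions and why one must ensure the level set $\{d_p=c\}$ is null under the relevant laws.
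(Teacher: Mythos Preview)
The statement is labeled a \emph{conjecture}, and the paper does not actually prove it: Appendix~\ref{sec: thmconj} offers only a short sketch that recasts the claim in the empirical-measure language $\Lambda^N=\tfrac{1}{N}\sum_{i=1}^N\delta_{X_i^N}$ of \cite{gomez2012bounded,graham1997stochastic}, identifies \eqref{eq: error} as a propagation-of-chaos statement, and remarks that the hypothesis $|e|<M$ corresponds to the finite-block-size assumption in those references. No estimate is derived there.

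Your proposal is therefore considerably more detailed than what the paper provides, and it takes a somewhat different route. Both you and the paper anchor on \cite{gomez2012bounded}, but the paper's sketch points toward the law-level/empirical-measure approach (tightness and identification of the limit as the unique solution of a Kac-type equation), whereas you outline a coupling-plus-hierarchy argument: build the nonlinear limiting process, track the size-$k$ correlation defects $R_k(t)$ via the exact one-step relation \eqref{eq: discrete-time}, and close with a discrete Gr\"onwall inequality. Your route is explicitly quantitative and makes transparent where the $1/N$ comes from (the fraction of selected hyperedges that overlap the tracked block), while the empirical-measure route tends to give cleaner existence/uniqueness statements at the cost of less explicit constants. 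Both strategies use the bound $|e|<M$ in the same essential way, to keep the relevant hierarchy finite.

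The obstacles you name are real and are exactly why the paper leaves this as a conjecture. The most delicate point in your outline is the claim that updates on hyperedges disjoint from the tracked block ``cancel at leading order'': such updates act on the full joint law $f(\bx,t)$, not on the product ansatz, so the remainder drags in correlations of agents \emph{outside} the tracked block, and bounding those uniformly in $N$ requires either the exchangeability reduction you invoke or an independent a~priori estimate. Your plan is a reasonable blueprint, but---like the paper's sketch---it is not yet a proof.
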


\medskip

The inequality \eqref{eq: error} indicates the weak convergence (i.e., the convergence of inner products with test functions) of two probability measures \cite{billingsley2013convergence}. 
Conjecture \ref{thm: conj} states that if the maximum hyperedge size $M := \mathrm{max}\{|e|,~e\in E\}$ is constant (independent on the network size $N$), then the probability density of any $m$ opinions (with $m \le M$) converges to an $m$-fold product of identical one-agent probability densities in the weak sense as the number of agents (i.e., $N$) becomes infinite. 
Specifically, the difference between two densities decays as $O(1/N)$.

We believe that it is possible to rigorously prove Conjecture \ref{thm: conj} by adopting methods from the proofs of Theorem $4.3$ in \cite{gomez2012bounded} and Theorem $3.1$ in \cite{graham1997stochastic}. Both of these papers considered stochastic processes that one can use to describe a dyadic DW model, formulated $f(\bx,t)$ as an empirical measure that is induced by such a stochastic process, and estimated an error bound for the weak convergence in a probability space. In Appendix \ref{sec: thmconj}, we give a strategy to prove Conjecture \ref{thm: conj}.

We take $\gamma = N$ in equation \eqref{eq: one-agent}. With this choice, in one unit of time, the number of selected hyperedges is proportional to the number of agents in a network. We then take the limit $N \rightarrow \infty$ and obtain the mean-field rate equation
\begin{equation}  \label{eq: mean-field}
    \frac{\partial}{\partial t}{g}(a,t) = \sum_{e\in E} p_e \int_{d_p(\by_e)<c} \!\!\!\Pi_{i\in e} g(y_i,t) \left[|e|\delta(a-\overline{\by}_e)-\sum_{i\in e}\delta(a-y_i)\right] \mathrm{d}\bye \,.
\end{equation}
As $N \rightarrow \infty$, the set $E$ has an infinite number of hyperedges and the probability $p_e$ of selecting hyperedge $e$ goes to $0$. However, the sum in \eqref{eq: mean-field} remains finite. 
We can further simplify \eqref{eq: mean-field} if we know the hyperedge set $E$ and the selection probability $p_e$.
\add{For example, if all interactions are dyadic (i.e., $|e| = 2$ for all hyperedges $e$) and $p_e$ is the same for all hyperedges $e$, then equation \eqref{eq: mean-field} is the same as the mean-field model \eqref{eq: node2density} after multiplying the latter by a factor of $2$ on the right-hand side.} 
In Section \ref{sec: numerics}, we simplify \eqref{eq: mean-field} when $E$ is a collection of all unordered triples, and we uniformly randomly choose each hyperedge (i.e., $p_e$ is the same for all $e$).


\subsection{Properties of densities in the mean-field limit}
In this subsection, we study the Cauchy problem for the density-based DW model in equation \eqref{eq: mean-field}, which is a nonlinear integro-differential equation of Kac type. We prove that the solution of this Cauchy problem preserves the basic properties of a probability density when the initial condition of the rate equation is a probability density function (PDF) and converges to noninteracting, isolated clusters as the time $t \rightarrow \infty$.
We use $g(a,t)$ to denote the solution of equation \eqref{eq: mean-field} with the initial condition $g_0(a)$.


\subsubsection{Nonnegativity}
A PDF must be nonnegative. In Theorem \ref{thm: nonnegativity}, we prove nonnegativity when the initial condition $g_0(a)$ is a function in the traditional sense. For the more general case in which $g_0(a)$ is a measure, one can employ a similar proof by using nonnegative test functions.
\begin{theorem}
\label{thm: nonnegativity}
If the initial condition $g_0(a)$ is nonnegative, then $g(a,t)$ is also nonnegative for all times $t > 0$. Furthermore, if $g_0(a)$ is strictly positive at some point $a'$ (i.e., if $g_0(a') > 0$), then $g(a',t) > 0$ for any finite time $t$.
\end{theorem}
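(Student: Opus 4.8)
The plan is to prove nonnegativity by rewriting the rate equation \eqref{eq: mean-field} in the form of a linear gain--loss balance and then applying a Gr\"onwall-type argument along characteristics. The key observation is that the right-hand side of \eqref{eq: mean-field} splits into a \emph{gain} term (the $|e|\delta(a-\overline{\by}_e)$ contributions, which deposit mass at group means) and a \emph{loss} term (the $-\sum_{i\in e}\delta(a-y_i)$ contributions, which remove mass at the current opinions). First I would integrate out the delta functions so that the loss term at a point $a$ becomes $-g(a,t)\,\lambda(a,t)$ for a nonnegative, bounded coefficient $\lambda(a,t)$ that collects all the integrals of $\Pi_{i\in e}g(y_i,t)$ over the confidence region with one coordinate pinned to $a$, while the gain term is a nonnegative integral expression $G(a,t)\ge 0$ whenever $g(\cdot,t)\ge 0$. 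Schematically,
\begin{equation}
    \frac{\partial}{\partial t} g(a,t) = G[g](a,t) - \lambda[g](a,t)\, g(a,t)\,,
\end{equation}
where $G[g]\ge 0$ and $\lambda[g]\ge 0$ provided $g(\cdot,t)$ is nonnegative up to time $t$.

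The main step is then an integrating-factor argument. Treating $\lambda[g](a,t)$ as a given coefficient, I would write the solution in Duhamel form,
\begin{equation}
    g(a,t) = g_0(a)\,e^{-\int_0^t \lambda[g](a,s)\,\mathrm{d}s} + \int_0^t e^{-\int_s^t \lambda[g](a,r)\,\mathrm{d}r}\,G[g](a,s)\,\mathrm{d}s\,,
\end{equation}
and argue that nonnegativity is preserved. Since this is implicit (the coefficients depend on $g$ itself), I would run a continuity/bootstrap argument: let $t^\star = \inf\{t>0 : g(\cdot,t)\not\ge 0\}$; on $[0,t^\star)$ both $\lambda[g]$ and $G[g]$ are well-defined and nonnegative, so the Duhamel formula shows $g(a,t)\ge g_0(a)e^{-\int_0^t\lambda\,\mathrm{d}s}\ge 0$, and by continuity $g(\cdot,t^\star)\ge0$ as well, contradicting minimality unless $t^\star=\infty$. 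For the strict-positivity claim, the exponential-survival term $g_0(a')e^{-\int_0^t\lambda(a',s)\,\mathrm{d}s}$ is strictly positive at $a'$ whenever $\lambda(a',\cdot)$ stays finite on $[0,t]$, since $g_0(a')>0$ and the gain term only adds more; this yields $g(a',t)>0$ for every finite $t$.

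The hard part will be making the loss coefficient $\lambda[g](a,t)$ rigorously well-defined and bounded after integrating out the delta functions, because the discordance constraint $d_p(\by_e)<c$ couples the pinned coordinate $a$ to the remaining coordinates in a nontrivial region, and one must verify that $\lambda[g](a,\cdot)$ stays integrable in time (so the exponential factor does not vanish) rather than merely nonnegative. I would control this by noting that $g(\cdot,t)$ is a probability density, so each partial integral of $\Pi_{i\in e}g(y_i,t)$ over any subset of coordinates is bounded by $1$, giving the uniform bound $\lambda[g](a,t)\le \sum_{e\in E}p_e|e|$; this also justifies the Duhamel representation and closes the bootstrap. A secondary technical point is to confirm that the gain term $G[g](a,t)$ is genuinely nonnegative after resolving $\delta(a-\overline{\by}_e)$, which follows because resolving the mean constraint leaves a nonnegative Jacobian factor multiplying a product of nonnegative densities integrated over the nonnegative confidence region.
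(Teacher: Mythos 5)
Your proposal is correct and follows essentially the same route as the paper: the paper likewise isolates the loss term as $-g(a,t)\beta(a,t)$ with a nonnegative coefficient (your $\lambda[g]$), drops the nonnegative gain term, runs a first-time-of-negativity contradiction, and concludes via Gr\"onwall's inequality that $g(a,t)\ge g_0(a)e^{-\int_0^t\beta(a,s)\,\mathrm{d}s}$, which gives strict positivity. Your explicit bound $\lambda[g](a,t)\le\sum_{e\in E}p_e|e|$ is a worthwhile addition the paper leaves implicit, since the strict-positivity conclusion does require $\int_0^t\beta(a',s)\,\mathrm{d}s<\infty$.
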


\begin{proof}
We proceed by contradiction. Suppose that $t = t^*$ is the earliest time that $g(a,t)$ takes a negative value.
That is,
\begin{equation}
    t^* := \inf_t \left\{\text{there exists some } a^* \text{ such that } g(a^*,t) < 0  \right\}\,.
\end{equation}
Because of the continuity in $t$, we know that $g(a^*,t^*) = 0$, $\frac{\partial}{\partial t} g(a^*,t^*) < 0$, and $g(a,t)\ge 0$ for all $t\le t^*$. 
Therefore, by equation \eqref{eq: mean-field}, it follows for all $t \le t^*$ that
\begin{equation}\label{eq: beta}
	\begin{aligned}
    \frac{\partial}{\partial t}{g}(a,t) &\ge -\sum_{e\in E} p_e \int_{d_p(\by_e)<c} \Pi_{i\in e} g(y_i,t) \left( \sum_{j\in e}\delta(a-y_{j})\right) \, \mathrm{d}\by_e \\
    &= -\sum_{e\in E} |e|p_e \int_{d_p(\by_e)<c} \Pi_{i\in e} g(y_i,t)\delta(a-y_{j}) \, \mathrm{d}y_j \, \mathrm{d}(\by_e\backslash y_{j})\\
    &= - g(a,t)\sum_{e\in E} |e|p_e \int_{d_p((a, \by_e\backslash y_{j}))<c} \Pi_{i\in e, i\neq j} g(y_i,t) \, \mathrm{d}(\by_e\backslash y_{j})\,,
\end{aligned}
\end{equation}
where $j$ is a node in the hyperedge $e$ and $(\by_e\backslash y_{j})$ denotes the opinions of the nodes in $e$ other than $j$. By symmetry, any node $j$ yields the same integral. Because $g(a^*,t^*)=0$, we have that $\frac{\partial}{\partial t}g(a^*,t^*)=0$, which contradicts $\frac{\partial}{\partial t}g(a^*,t^*)<0$.

Define $\beta(a,t) := \sum_{e\in E} |e|p_e \int_{d_p((a, \by_e\backslash y_{j}))<c} \Pi_{i\in e, i\neq j} g(y_i,t)\,\mathrm{d}(\by_e\backslash y_{j})$ in equation \eqref{eq: beta}. 
For any $a$ and $t > 0$, we have
\begin{equation} 
    \frac{\partial}{\partial t}{g}(a,t) \ge -g(a,t)\beta(a,t)\,.
\end{equation}
Because $\beta(a,t)$ is nonnegative, we apply Gr\"{o}nwall's inequality for each fixed $a$ and obtain
\begin{equation}
    g(a,t) \ge g_0(a) e^{-\int_0^t \beta(a,s)\,\mathrm{d}s}\,.
\end{equation}
For any $g_0(a') > 0$, it thus follows that $g(a',t) > 0$ for any finite $t$. 
\end{proof}


\subsubsection{Mass and opinion conservation}

Equation \eqref{eq: mean-field} describes a time-dependent density of the dynamics in \eqref{eq: agent-iterations} in a mean-field limit. Furthermore, we see from \eqref{eq: agent-iterations} that the mean is constant for all times $t$.
In Theorem \ref{thm: conservation}, we prove that $g(a,t)$ has a conserved mass (i.e., $g(a,t)$ integrates to $1$) and conserved mean opinion for all times $t$. We also show in this theorem that $g(a,t)$ has nonincreasing even-order moments. 
We start with a short lemma.

\begin{lemma}
\label{thm: convex}
    Let $\varphi(a)$ be a convex function. The expectation of $\varphi(a)$ with respect to $g(a,t)$ is nonincreasing with time. 
\end{lemma}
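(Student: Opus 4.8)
The plan is to differentiate the expectation $\int\varphi(a)\,g(a,t)\,\mathrm{d}a$ in time, substitute the mean-field rate equation \eqref{eq: mean-field}, and reduce the sign question to a pointwise application of Jensen's inequality on each hyperedge. First I would write $\frac{\mathrm{d}}{\mathrm{d}t}\int\varphi(a)\,g(a,t)\,\mathrm{d}a=\int\varphi(a)\,\frac{\partial}{\partial t}g(a,t)\,\mathrm{d}a$ and insert the right-hand side of \eqref{eq: mean-field}. Integrating out the variable $a$ collapses the two Dirac masses: $\delta(a-\overline{\by}_e)$ contributes $\varphi(\overline{\by}_e)$ and each $\delta(a-y_i)$ contributes $\varphi(y_i)$. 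This yields
\[
\frac{\mathrm{d}}{\mathrm{d}t}\int\varphi(a)\,g(a,t)\,\mathrm{d}a = \sum_{e\in E} p_e \int_{d_p(\by_e)<c} \Pi_{i\in e} g(y_i,t)\,\Bigl[|e|\,\varphi(\overline{\by}_e)-\sum_{i\in e}\varphi(y_i)\Bigr]\,\mathrm{d}\bye\,.
\]

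The crux is to show that the bracketed factor is nonpositive for every admissible configuration $\bye$. Since $\overline{\by}_e=\frac{1}{|e|}\sum_{i\in e}y_i$ is the arithmetic mean of the $y_i$, the convexity of $\varphi$ and Jensen's inequality give $\varphi(\overline{\by}_e)\le\frac{1}{|e|}\sum_{i\in e}\varphi(y_i)$, that is, $|e|\,\varphi(\overline{\by}_e)-\sum_{i\in e}\varphi(y_i)\le 0$. This is precisely the inequality one expects from the compromise rule, which replaces all opinions in $e$ by their group mean; convexity guarantees that moving mass to the mean cannot increase $\varphi$.

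Finally, I would combine this with the nonnegativity of the density from Theorem \ref{thm: nonnegativity}, so that $\Pi_{i\in e}g(y_i,t)\ge 0$, together with $p_e\ge 0$. Each summand of the integral is then a product of nonnegative factors and a nonpositive bracket, so the whole integrand, and hence the integral, is nonpositive. This gives $\frac{\mathrm{d}}{\mathrm{d}t}\int\varphi(a)\,g(a,t)\,\mathrm{d}a\le 0$, i.e., the expectation of $\varphi$ with respect to $g(\cdot,t)$ is nonincreasing in $t$.

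I do not anticipate a deep obstacle, since the argument is essentially Jensen's inequality applied hyperedge by hyperedge. The only technical point I would flag is the justification of differentiating under the integral sign and of the integrability of $\varphi$ against $g$; these are routine under mild growth conditions on $\varphi$ (or can be handled by truncation and approximation), and the restriction of the integration domain to the discordance region $d_p(\by_e)<c$ plays no role in the inequality beyond cutting down the domain.
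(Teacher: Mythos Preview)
Your proposal is correct and follows essentially the same route as the paper: differentiate the expectation, substitute the mean-field rate equation, collapse the Dirac deltas, and apply Jensen's inequality to the bracket $|e|\,\varphi(\overline{\by}_e)-\sum_{i\in e}\varphi(y_i)$. The paper's proof is slightly terser (it does not explicitly invoke Theorem~\ref{thm: nonnegativity} or discuss differentiating under the integral), but the argument is identical.
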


\begin{proof}
We take the time derivative of the expectation $\int\! \varphi(a)g(a,t)\,\mathrm{d}a$. A direct computation from \eqref{eq: mean-field} yields
    \begin{equation}
        \frac{\partial}{\partial t} \int\! \varphi(a)g(a,t)\,\mathrm{d}a = \sum_{e\in E}|e|p_e \int_{d_p(\by_e)<c} \!\!\!\Pi_{i\in e} g(y_i,t) \left[\varphi(\overline{\by}_e)-\sum_{i\in e}\frac{\varphi(y_i)}{|e|}\right] \mathrm{d}\bye \,.
        \label{eq: conservation}
    \end{equation}
Because $\varphi(a)$ is convex, it is always the case that $\varphi(\overline{\by}_e)-\sum_{i\in e}\varphi(y_i)/|e| \le 0$, which implies that the integral in \eqref{eq: conservation} is nonincreasing with time. When $\varphi$ is strictly convex, the expectation is strictly decreasing.
\end{proof}

From Lemma \ref{thm: convex}, we know that both centered and uncentered moments of $g(a,t)$ are nonincreasing. We state and prove our results for uncentered moments.

\begin{theorem}
\label{thm: conservation}
Let $\mathcal{M}_k(t)$ be the $k$th uncentered moment of $g(a,t)$. That is,
\begin{equation}
    \mathcal{M}_k(t):=\int g(a,t)a^k\,\mathrm{d}a\,.
\end{equation}
When $k = 0$ (which gives the mass) or $k =1$ (which gives the mean opinion), the moment is constant. Additionally, when $k$ is even and at least $2$, the moment $\mathcal{M}_k(t)$ is nonincreasing with time.
\end{theorem}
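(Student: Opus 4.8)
The plan is to compute $\frac{\mathrm{d}}{\mathrm{d}t}\mathcal{M}_k(t)$ directly from the rate equation \eqref{eq: mean-field} and then apply Lemma \ref{thm: convex} with the specific test function $\varphi(a)=a^k$. First I would substitute $\varphi(a)=a^k$ into the expectation formula \eqref{eq: conservation} of Lemma \ref{thm: convex}, which gives
\begin{equation*}
    \frac{\mathrm{d}}{\mathrm{d}t}\mathcal{M}_k(t) = \sum_{e\in E}|e|p_e \int_{d_p(\by_e)<c} \Pi_{i\in e} g(y_i,t) \left[\overline{\by}_e^{\,k}-\frac{1}{|e|}\sum_{i\in e}y_i^k\right] \mathrm{d}\bye \,.
\end{equation*}

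For the cases $k=0$ and $k=1$ I would argue directly that the bracketed integrand vanishes pointwise. When $k=0$, the bracket is $1 - \frac{1}{|e|}\sum_{i\in e}1 = 1-1 = 0$, so mass is conserved. When $k=1$, the bracket is $\overline{\by}_e - \frac{1}{|e|}\sum_{i\in e}y_i = 0$ by the very definition of the group mean $\overline{\by}_e$, so the mean opinion is conserved. These two identities are exact and hold independently of the confidence-bound constraint, so the conservation is unconditional.

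For the even-order moments with $k\ge 2$, the key observation is that $\varphi(a)=a^k$ is a convex function on $\bR$ whenever $k$ is even (since $\varphi''(a)=k(k-1)a^{k-2}\ge 0$ for even $k$). I would therefore invoke Lemma \ref{thm: convex} directly: the expectation of any convex function is nonincreasing in time, and $\mathcal{M}_k(t)$ is precisely the expectation of the convex function $a^k$. Equivalently, one sees this at the level of the integrand via Jensen's inequality: for the convex map $a\mapsto a^k$, we have $\overline{\by}_e^{\,k} = \varphi\!\left(\frac{1}{|e|}\sum_{i\in e}y_i\right)\le \frac{1}{|e|}\sum_{i\in e}\varphi(y_i) = \frac{1}{|e|}\sum_{i\in e}y_i^k$, so the bracket is nonpositive. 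Since $g\ge 0$ (Theorem \ref{thm: nonnegativity}) and $|e|p_e\ge 0$, the entire integral is nonpositive, giving $\frac{\mathrm{d}}{\mathrm{d}t}\mathcal{M}_k(t)\le 0$.

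I expect no serious obstacle here, as the statement is essentially an immediate corollary of Lemma \ref{thm: convex} once the right test functions are chosen; the main subtlety worth flagging is just ensuring that $a^k$ is convex only for \emph{even} $k$ (odd powers are not convex on all of $\bR$, which is exactly why the monotonicity claim is restricted to even moments), and confirming that the moment integrals converge so that the interchange of differentiation and integration in passing from \eqref{eq: mean-field} to the moment equation is justified. The latter can be handled by assuming sufficient decay of $g_0$ (e.g. that all moments of the initial density are finite), which is preserved under the dynamics precisely because the even moments do not grow.
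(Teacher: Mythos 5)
Your proposal is correct and follows essentially the same route as the paper: substitute $\varphi(a)=a^k$ into Lemma \ref{thm: convex}, observe that the bracketed term vanishes identically for $k=0$ and $k=1$, and invoke convexity of $a^k$ for even $k\ge 2$. The extra remarks on Jensen's inequality and on justifying the moment integrals are sensible but not needed beyond what the paper already does.
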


\begin{proof}
Let $\varphi(a)=a^k$ in Lemma \ref{thm: convex}. The subtraction in the right-hand side of \eqref{eq: conservation} gives
\begin{equation}
    \left(\frac{\sum_{i\in e}y_i}{|e|}\right)^k - \sum_{i\in e} \frac{y_i^k}{|e|} = 0\,,
\end{equation}
which implies that $\frac{\partial}{\partial t}\mathcal{M}_k(t)=0$ when $k=0$ or $k = 1$.
When $k$ is even and is at least $2$, the function $a^k$ is convex, so the moments are nonincreasing by Lemma \ref{thm: convex}.
\end{proof}

\subsubsection{Convergence to a limit state}
It is important to examine the limit state of the density-based model \eqref{eq: mean-field}. There exist rigorous proofs that some BCMs on graphs \cite{lorenz2005stabilization} and hypergraphs \cite{hickok2022bounded} always converge to some limit state. 
For some BCMs with heterogeneities, such as the model in \cite{weisbuch2003interacting} (which, e.g., includes heterogeneous confidence bounds), researchers have observed convergence numerically but have not proven it rigorously.
In Theorem \ref{thm: convergence}, we present a similar convergence result from the viewpoint of density evolution.

To formulate our convergence statement, we first consider 
the special opinion vector $\by_{|e|}^{\triangle} = (\triangle, 0,\ldots,0)\in \mathbb{R}^{|e|}$. Given a BCM on a network, we define the \emph{minimum isolation distance}
\begin{equation} \label{eq: triangle}
	\triangle_{d_p}^*(c) := \min_{\substack{e \in E {\textrm{ s.t. }}  \\ d_p(\by_{|e|}^{\triangle}) = c}}  \triangle 
\end{equation}
that is associated with its discordance function and confidence bound.

The construction of $\by_{|e|}^{\triangle}$ corresponds to the optimal solution, up to some permutation, of a minimization problem in Lemma \ref{thm: isolationdistance}, which states that the smallest nontrivial discordance is achieved by a vector of the form $\by_{|e|}^{\triangle}$ when the opinions take a set of discrete values. 
Given a discordance function, one can explicitly compute $\triangle_{d_p}^*(c)$ for a BCM on a network. For example, for a BCM on an ordinary graph (i.e., with only dyadic interactions), $\triangle_{d_p}^*(c)$ is equal to the confidence bound $c$. In Section \ref{sec: numerics}, we give the formula for $\triangle_{d_p}^*(c)$ for $3$-uniform hypergrahs (i.e., when $|e| = 3$ for all hyperedges $e$).

\begin{theorem}
\label{thm: convergence}
Let $g_{\infty}(a)$ be the limit density of $g(a,t)$ as time $t \rightarrow \infty$. Such a limit density $g_{\infty}(a)$ always exists, and $g_{\infty}(a)$ is a sum of Dirac delta functions: 
\begin{equation} \label{eq: ginf}
        g_\infty(a) = \sum_{k = 1}^K m_k\delta(a - a_k)\,, 
\end{equation}
where $K$ is the number of opinion clusters, $m_k$ denotes the population mass of the $k$th steady-state opinion cluster, $a_k$ denotes the position (i.e., opinion value) of this opinion cluster, $\sum_{k=1}^K m_k = 1$, and $|a_i - a_j|\ge\triangle_{d_p}^*(c)$ for $i \neq j$. The constant $\triangle_{d_p}^*(c)$ is the minimum isolation distance \eqref{eq: triangle}, and $K$ is finite if the initial density $g_0(a)$ has finite support. 
\end{theorem}

\medskip
\medskip

We prove Theorem \ref{thm: convergence} in Appendix \ref{sec: convergence}. 
This theorem states that the mean-field density converges to a sum of noninteracting Dirac delta functions and that the distance between any two delta functions is at least a constant that is determined completely by the discordance function and the confidence bound.


\section{Numerical results for density-based BCMs on hypergraphs}\label{sec: numerics}

We consider an undirected $3$-uniform hypergraph, so the hyperedge set $E$ is a collection of unordered triples (i.e., $E = \{e: |e| = 3\}$). In each time step, we select a hyperedge uniformly at random.
The mean-field density dynamics satisfies
\begin{equation}
	\begin{aligned}
    		\frac{\partial}{\partial t}{g}(a,t) &= \int_{d_p(\bx)<c} \!\!\! \Pi_{i=1}^3g(x_i,t)\left[ 3\delta\left(a-\frac{1}{3}\sum_{i=1}^3x_i\right)-\sum_{i=1}^3\delta(a-x_i) \right] \mathrm{d}\bx\,, 
    \end{aligned}
    \label{eq: node3density}
\end{equation}
where $d_p$ is the discordance function \eqref{eq: discordancep}.
As indicated by Theorem \ref{thm: convergence}, the limit clusters are isolated from each other by a distance of at least $\triangle_{d_p}^*(c)$, which depends exclusively on the discordance function $d_p$ and the confidence bound $c$.
To diminish the discrepancy between different choices of $d_p$ in our numerical simulations, we introduce constants $\alpha_p$ to ensure that all $d_p$ yield the same minimum isolation distance. Specifically, we use 
\begin{equation}    \label{eq: alpha_p}
    \alpha_p = \left( \frac{3^{p+1}}{2+2^p} \right)^{1/p}
\end{equation}
in equation \eqref{eq: discordancep}. The minimum isolation distance $\triangle_{d_p}^*(c)$ is then equal to the confidence bound $c$.

To reduce the number of dimensions of the rate equation \eqref{eq: node3density}, we integrate the Dirac delta functions and obtain
\begin{equation} \label{eq: node3simplify}
\begin{aligned}
       \frac{\partial}{\partial t}{g}(a,t) &=  9\int_{\widetilde{d}_p((a,x_1,x_2))<c} g(3a-x_1-x_2,t)g(x_1,t)g(x_2,t)\,\mathrm{d}x_1 \, \mathrm{d}x_2 \\
       &\quad - 3g(a,t)\int_{d_p((a,x_1,x_2))<c}
       g(x_1,t)g(x_2,t) \, \mathrm{d}x_1 \, \mathrm{d}x_2\,,
\end{aligned}
\end{equation}
where $\widetilde{d}_p((a,x_1,x_2)) = \alpha_p \left(\frac{|2a-x_1-x_2|^p+|a-x_1|^p+|a-x_2|^p}{3}\right)^{\frac1p}$.
We integrate the rate equation \eqref{eq: node3simplify} using a fourth-order Adams--Bashforth method \cite{zwillinger1998handbook} and shift the density numerically at each time step to conserve the mass. Once the probability distribution has noninteracting sets (i.e., ``patches") of opinions up to some ``separation accuracy", which we take to be $10^{-7}$ in our simulations, we apply numerical integration to the opinion patches independently with adaptive time steps to expedite convergence. 
We integrate the system \eqref{eq: node3simplify} for a sufficiently long time so that the probability densities behave as a sum of Dirac delta functions up to a ``density accuracy", which we take to be $10^{-6}$ in our simulations. In Figure \ref{fig: separator}, we illustrate both our stopping criterion and how to determine the isolated opinion clusters.

\begin{figure}[htbp]
  \centering
  \includegraphics[width=0.8\textwidth]{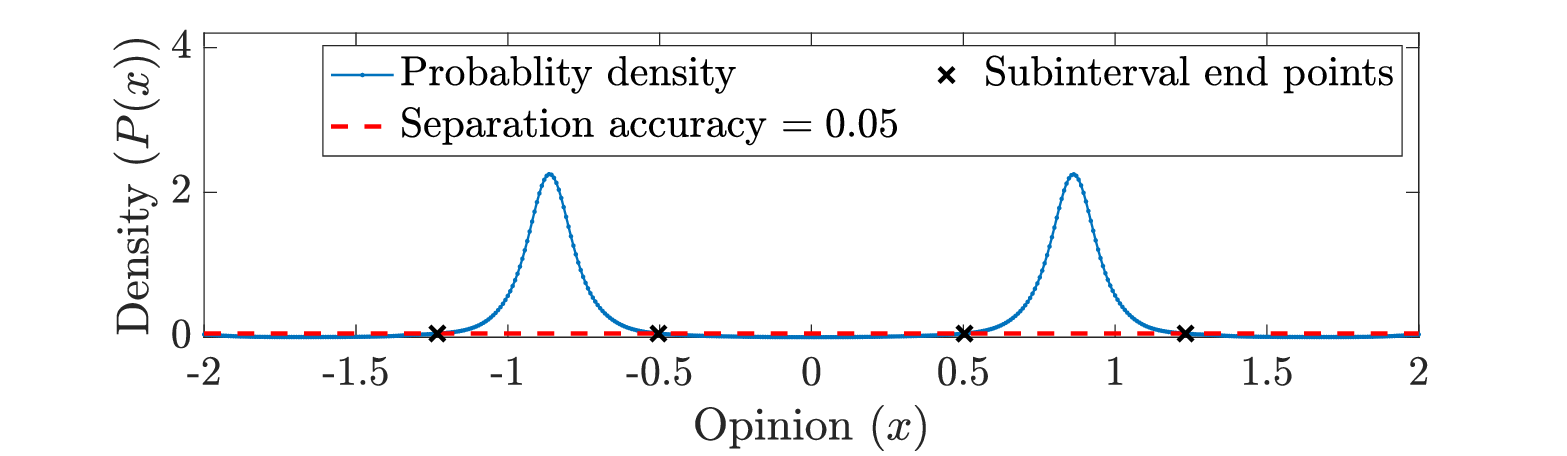}
  \includegraphics[width=0.8\textwidth]{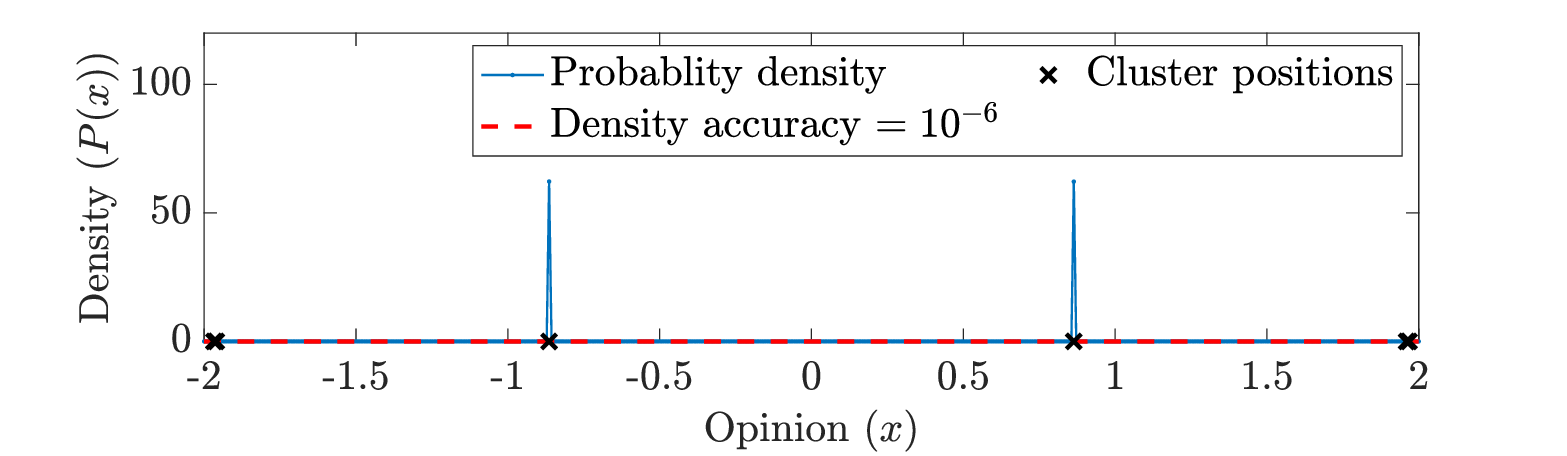}
  \includegraphics[width=0.8\textwidth]{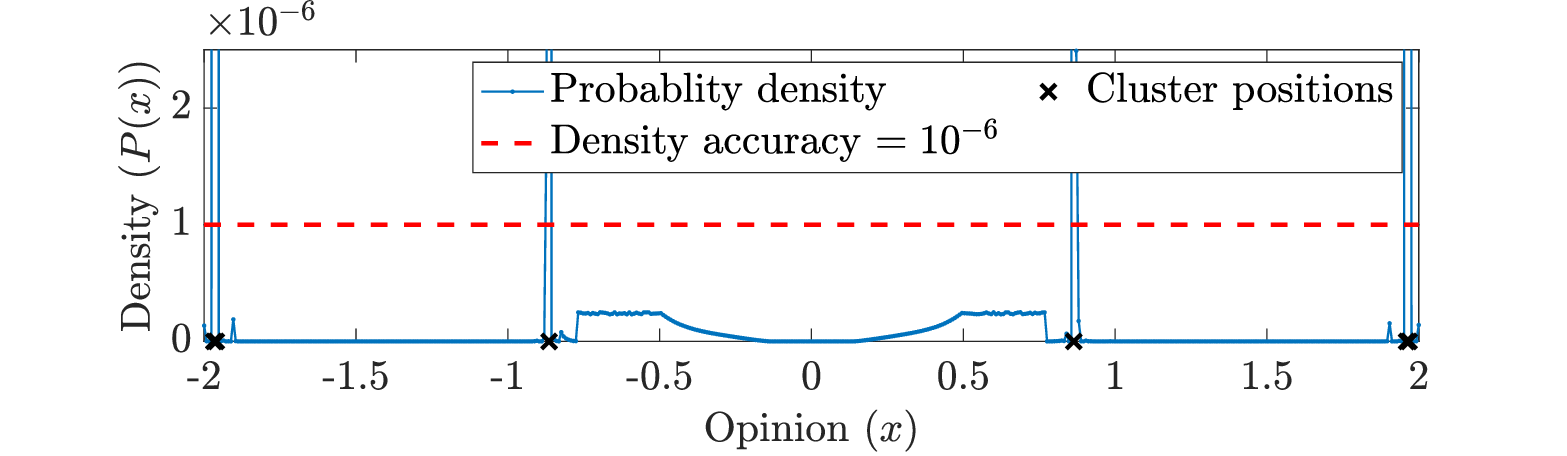}
  \caption{(Top) We divide the depicted probability density into two opinion patches, such that the values of the probability density at $x = -0.5$ and $x = 0.5$ are less than or equal to the separation accuracy $0.05$. 
  The two patches are separated from each other by at least the minimum isolation distance \eqref{eq: triangle}, which is equal to $1$ in the figure. 
  Our computations then reduce to independent simulations on two disjoint intervals. We also show (center) a stopping point of the simulations and (bottom) a magnification of the region near the horizontal axis. We integrate \eqref{eq: node3simplify} until the density profile looks like a sum of Dirac delta functions up to a density accuracy. Specifically, we stop our simulations when any two grid points with probability density values that exceed the density accuracy $10^{-6}$ are in consecutive grid points or are separated by at least the minimum isolation distance. 
  In the center and bottom figures, the mesh size is $0.008$, and the grid points with probabilities that exceed $10^{-6}$ are $x = -1.968$, $x = -1.960$, $x = -0.864$, $x = 0.864$, $x = 1.960$, and $x= 1.968$. 
  These points are either consecutive grid points or they differ by at least $1$, which is the minimum isolation distance.}
  \label{fig: separator}
\end{figure}

In the following subsections, we conduct three sets of numerical experiments. In Section \ref{sec: bifurcation}, we fix the confidence bound to $c = 1$ and observe the steady-state solutions of the mean-field model \eqref{eq: node3density} with different choices of discordance functions $d_p$ for a uniform initial distribution $g_0(a)$.
In Section \ref{sec: gaussian}, we consider two types of initial distributions (uniform and normal) with the same variance, and we numerically obtain the positions of the steady-state opinion clusters.
As we increase the variance, the positions of the clusters undergo a periodic sequence of bifurcations.\footnote{We use the term ``bifurcation'' to refer to a change in qualitative behavior as a parameter crosses some value, rather than demanding the stricter sense of the word ``bifurcation'' from bifurcation theory. 
} The number of steady-state opinion clusters also grows as a periodic sequence.
In Section \ref{sec: meanfieldlimit}, we numerically compare the density-based mean-field model \eqref{eq: mean-field} with agent-based dynamics using Monte Carlo simulations. 


\subsection{Bifurcations patterns with different initial ranges of opinion values} \label{sec: bifurcation}
We fix the confidence bound to $c = 1$ and draw the initial opinions from the uniform distribution on $[-D,D]$. We consider values of $D$ between $0.5$ and $6$ with an increment of $0.1$. We use discordance functions with $p = 0.5$, $p = 1$, and $p = 2$ in \eqref{eq: discordancep}. 
In Figure \ref{fig: varyingD}, {we show the positions (i.e., $a_k$) and the population masses (i.e., $m_k$) of the steady-state opinion clusters} 
for different values of $D$ and $p$. Ben-Naim et al.~\cite{ben2003bifurcations} numerically computed a bifurcation pattern of steady-state cluster positions for a BCM on graphs, and we obtain a similar pattern for our BCM on hypergraphs.

\begin{figure}[htbp]
  \centering
  \includegraphics[width=0.31\textwidth]{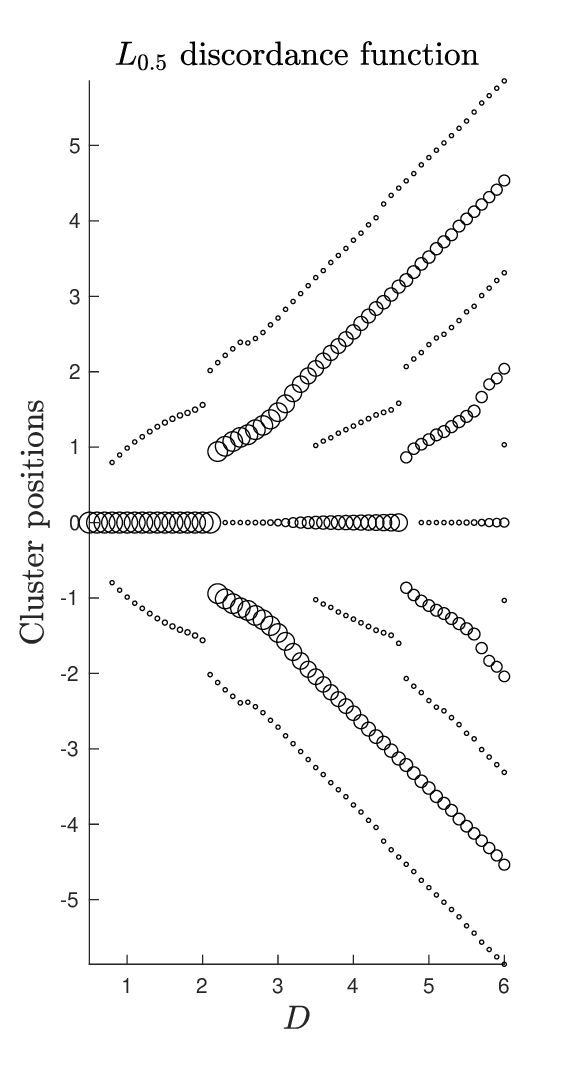}
  \includegraphics[width=0.31\textwidth]{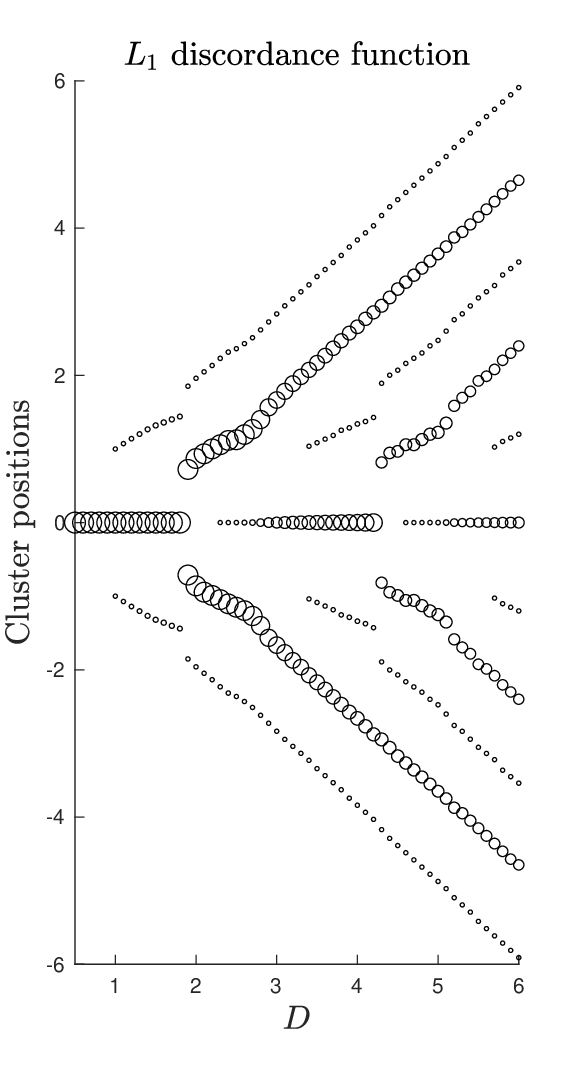}
  \includegraphics[width=0.31\textwidth]{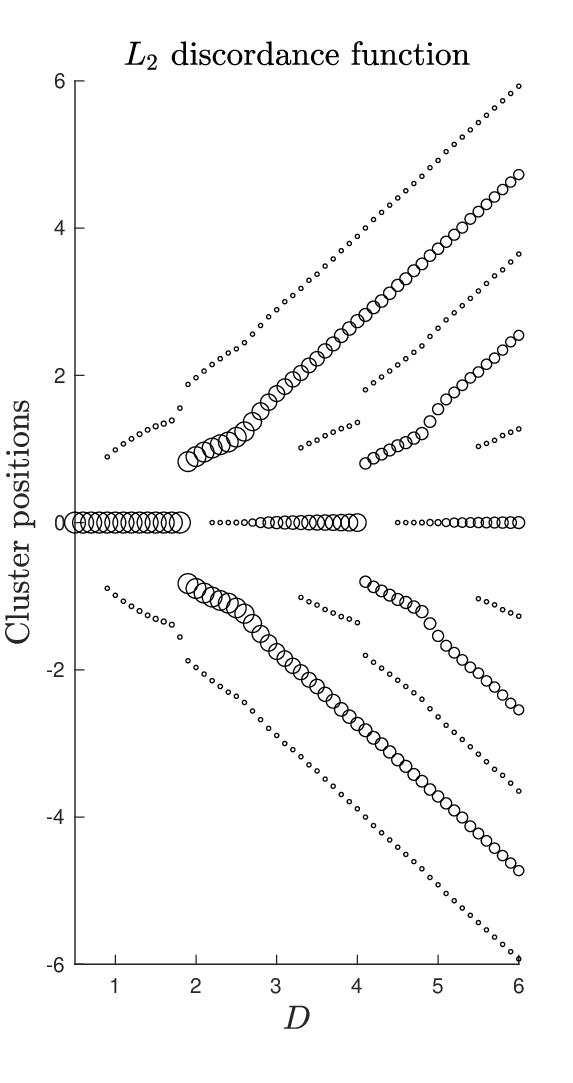} \vspace{-10pt}
  \caption{The positions of steady-state opinion clusters with different initial opinion ranges $[-D,D]$ and discordance function $d_p$ for (left) $p = 0.5$, (center) $p = 1$, and (right) $p = 2$. The sizes of the circles are proportional to the population masses of the steady-state clusters. 
  There are $500$ grid points on the interval $[-D,D]$.}
  \label{fig: varyingD}
\end{figure}

In regions that are not near bifurcation points, we observe in our numerical computations that the positions of the steady-state opinion clusters have a linear relationship with $D$ and that they form parallel straight lines far from the locations at which they nucleate and disappear.
As we proved in Theorem \ref{thm: convergence}, the opinion clusters are separated by a theoretical minimum isolation distance $\triangle_{d_p}^*(c)$, which we set to $1$ (by introducing an appropriate coefficient in \eqref{eq: alpha_p}) in our example.

For a BCM on an ordinary graph, the distance between steady-state opinion clusters is always at least as large as the confidence bound $c$~\cite{ben2003bifurcations, lorenz2007continuous}, which is the theoretical minimum isolation distance in \eqref{eq: triangle} with the dyadic discordance (i.e., the distance between two opinions). For a BCM on a hypergraph, we observe analogous results for all examined discordance functions.

The nucleation and disappearance of opinions clusters has a periodic pattern.
Major opinion clusters (whose population mass is more than $0.1$) and minor opinion clusters (whose population mass is less than or equal to $0.1$) appear in an interlacing fashion. 
The mass of the center opinion cluster grows as $D$ increases until it splits into two symmetric major clusters. 
The center cluster appears again as a minor cluster after the two innermost major clusters are sufficiently far apart; the center cluster grows gradually into a major cluster and repeats the above nucleation and disappearance process.


\subsection{Steady-state densities with different initial opinion distributions}\label{sec: gaussian}
Thus far, we have considered initial opinions that follow a uniform distribution. We now examine the positions of steady-state opinion clusters when opinions initially follow a centered normal distribution $\mathcal{N}(0,\sigma^2)$. That is,
\begin{equation}
    g_0(a) = \frac{1}{\sqrt{2\pi}\sigma} e^{-a^2/\sigma^2}\,.
\end{equation}
We examine the steady-state cluster positions for different values of the distribution variance. 
We use a symmetric computational domain $[-D', D']$, which we take to be sufficiently large so that $g_0(D') < 10^{-9}$.
In Figure \ref{fig: variance}, we show the steady-state cluster positions for different initial variances $\sigma^2$. We fix the confidence bound to $c = 1$ and use the $L_2$ discordance function for all of our simulations. As a comparison, we also include our results with uniform initial distributions.

\begin{figure}[htbp]
  \centering
  \label{fig: variance}
  \includegraphics[width=0.46\textwidth]{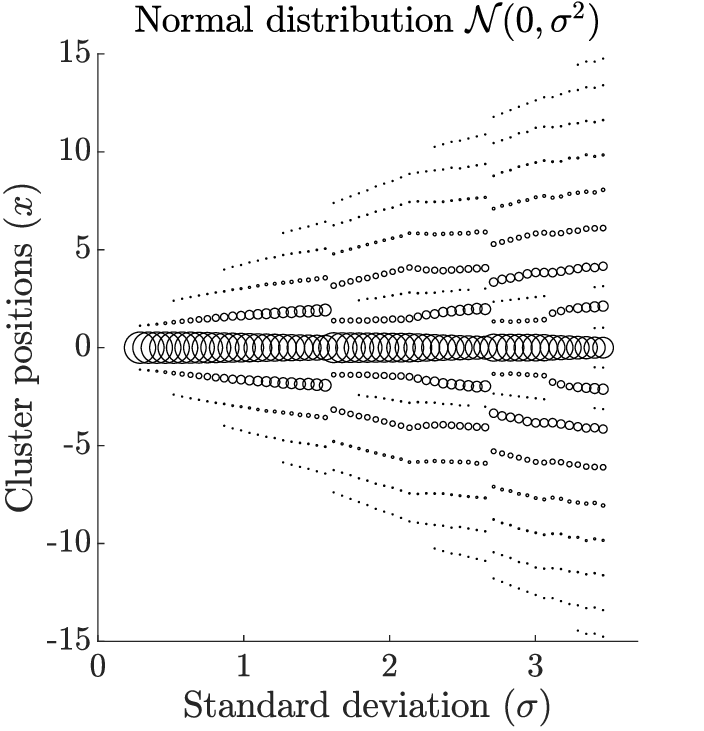}
  \includegraphics[width=0.46\textwidth]{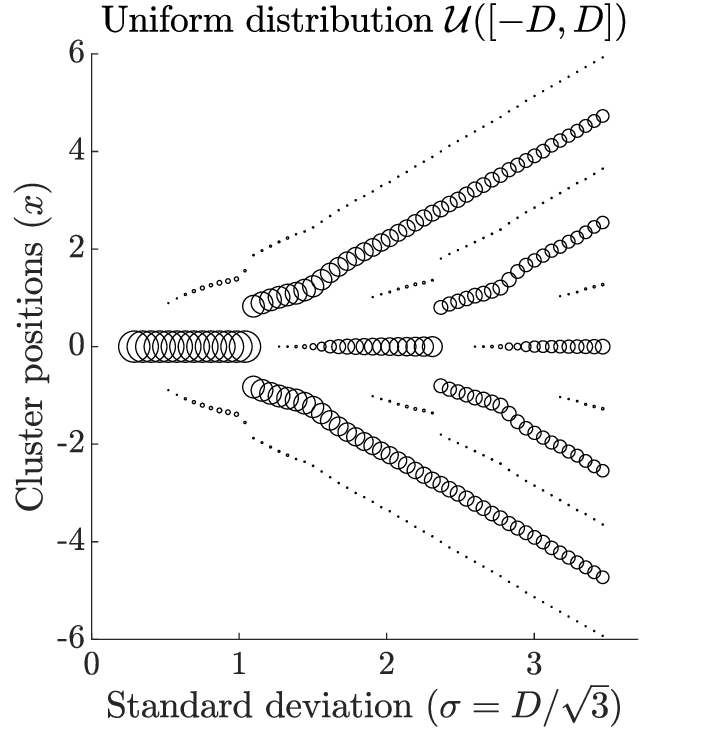}
  \caption{The positions of the steady-state opinion clusters for (left) a normal initial opinion distribution and (right) a uniform initial opinion distribution for different values of the standard deviation $\sigma$.
  The sizes of the circles are proportional to the population masses of the limit opinion clusters. We use $500$ grid points for all simulations and use an adaptive scheme for numerical simulations independently for the separate patches (see Figure \ref{fig: separator}).
  }
\end{figure}

We also observe an interesting pattern in the positions of the steady-state opinion clusters as we increase the initial variance $\sigma^2$. The left panel of Figure \ref{fig: variance} has three ``episodes": $\sigma \in [0.3,1.55]$, $\sigma \in [1.6,2.65]$, and $\sigma \in [2.7,3.5]$. (We consider values of $\sigma$ in increments of $0.05$.) Each episode has a major opinion cluster in the center (i.e., at $a = 0$). In each episode, the center-cluster mass decreases and the minor-cluster masses increase as we increase $\sigma$. Additionally, in each episode, minor clusters nucleate on both sides away from the center. Because the normal distribution is positive on the entire real line, it is possible that the steady state has infinitely many minor clusters and that we simply do not see them with our current computational accuracy. 


\subsection{The steady-state density of agent-based dynamics}\label{sec: meanfieldlimit}

We now examine steady states of the agent-based dynamics with the opinion update rule \eqref{eq: agent-iterations} on a complete network with initial opinions that we draw from a uniform distribution. We sample the steady-state ensemble using $10000$ Monte Carlo simulations. We observe convergence to a sum of Dirac delta functions as the network size increases.
We show the results of these computations in Figure \ref{fig: varyingN}.

\begin{figure}[htbp]
  \begin{center}
 \includegraphics[width=0.8\textwidth]{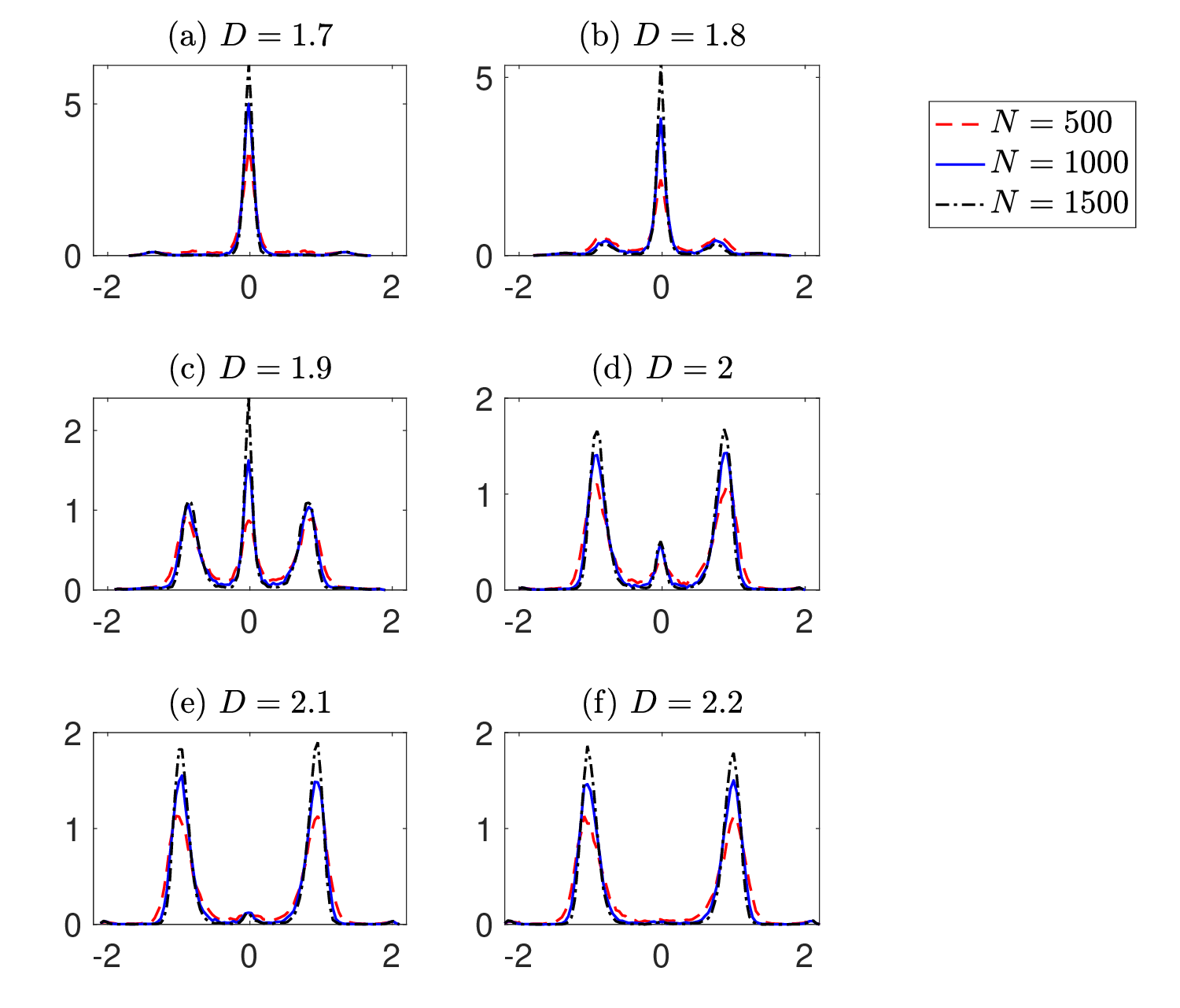}
 \vspace{-10pt}
  \caption{The steady-state \add{PDF} for simulations with finite numbers of agents with the opinion update rule \eqref{eq: agent-iterations} on a complete network.
\add{We obtain the PDF profiles from Monte Carlo simulations of \eqref{eq: agent-iterations} and normalize each PDF so that it integrates to $1$.}
  In all simulations, we draw the initial opinions from the uniform distribution on $[-D,D]$, use the discordance function $d_1$, and take the confidence bound to be $c = 1$. In each panel, we fix $D$ (and hence the initial opinion range) and compare the steady-state densities for $N = 500$, $N = 1000$, and $N = 1500$ agents. In each panel, the horizontal axis is the opinion $x$, and the vertical axis is the steady-state PDF $P(x)$.
  We show results for (a) $D = 1.7$, (b) $D = 1.8$, (c) $D = 1.9$, (d) $D = 2$, (e) $D = 2.1$, and (f) $D = 2.2$.  
  }
  \end{center}
  \label{fig: varyingN}
\end{figure}

For any finite hypergraph size $N$, the density that we obtain from Monte Carlo simulations is not a sum of Dirac delta functions (or a single Dirac delta function) at steady state. As we increase $N$, we observe a slow progression of the densities towards a sum of Dirac delta functions. As we increase $D$ from $1.7$ to $2.2$, the major cluster in the center splits into two symmetric clusters. 
In the density-based simulation in Figure \ref{fig: varyingD}, this split occurs between $1.8$ and $1.9$; this value does not agree quantitatively with what we obtain using Monte Carlo simulations. 

Inconsistencies between density-based models and Monte Carlo simulations of agent-based dynamics have been observed previously in BCMs.
For example, Pineda et al.~\cite{pineda2009noisy} studied a density-based DW model with dyadic interactions and 
used agent-based Monte Carlo simulations to approximate the dynamics on a network 
in the mean-field limit. They observed (see Figure 3 of their paper) that the density evolution can result in qualitatively different behavior (e.g., different numbers of steady-state opinion clusters) as one increases the number of agents in Monte Carlo simulations and that Monte Carlo simulations yield different results than their density-based model.
Because of finite-size effects, the densities from Monte Carlo simulations of their agent-based BCM did not agree with the solutions of their density-based BCM.


\section{Conclusions and discussion}\label{sec: discussion}

We formulated a model of the opinion density of a discrete-time bounded-confidence model (BCM) on hypergraphs and derived the associated rate equation in a mean-field limit with infinitely many agents. By employing a time-rescaling parameter, we extended this rate equation to a continuous-time setting. We thereby obtained a density-based, continuous-time {mean-field} BCM in the form of an integro-differential equation.

We investigated the properties of our density-based BCM by exploring the Cauchy problem of the rate equation in the mean-field limit. We proved that its solution satisfies the properties of a probability density (specifically, nonnegativity and mass conservation). We then examined the asymptotic dynamics of this solution as time goes to infinity. We proved that when the initial condition of the rate equation is a probability density, the solution converges in time in a weak sense to a sum of Dirac delta functions. 
These Dirac delta functions are isolated from each other at least by a constant, which is independent of the initial density and is determined by the discordance function and the confidence bound.

We also studied the steady states of our density-based BCM by numerically solving our integro-differential rate equation for different choices of initial distributions and discordance functions. Our numerical results agree with our mathematical analysis that steady-state solutions take the form of a sum of isolated Dirac delta functions, which yield opinion clusters in the BCM. For both bounded and unbounded initial opinion distributions, we observed that the positions of the steady-state opinion clusters undergo a periodic sequence of bifurcations as we increase the variance of the initial opinion distribution. We also numerically compared our density-based results with agent-based Monte Carlo simulations. As we increased the network size, we observed that the steady-state opinion density converges to a sum of Dirac delta functions.

In our study, we generalized the Deffuant--Weisbuch BCM, but it is also possible to examine other types of BCMs (such as the Hegselmann--Krause model)
on hypergraphs using a density description. In our numerical simulations, we examined $3$-uniform hypergraphs (in which all hyperedges have exactly $3$ nodes), but it is desirable to consider density-based BCMs on more general types of hypergraphs, such as hypergraphs that have both $2$-node and $3$-node hyperedges. We expect that it will be interesting to examine the qualitative dynamics and bifurcation patterns for such hypergraphs.


\section*{Acknowledgements}
MAP acknowledges support from the National Science Foundation (grant number 1922952) through the Algorithms for Threat Detection (ATD) program.


\appendix

\section{Appendix} \label{app}

\subsection{Proof of Theorem \ref{thm: Pdiscrete}}\label{sec: thmPdiscrete}

\begin{proof}
At each time step $n$, we define a discrete random variable $\mathcal{E}(n)$ to describe the selection of a hyperedge $e$ with probability $\mathbb{P}(\mathcal{E}(n)=e) = p_e$.  
We expand $P(\bx,n+1)$ using conditional expectations:
\begin{equation}
\begin{aligned}
    P(\bx, n+1) &= \sum_{e\in E} p_e P(\bx, n+1 | ~ \mathcal{E}(n) = e)\,,
\end{aligned}
\end{equation}
where $P(\bx, n+1 | ~ \mathcal{E}(n) = e)$ is the conditional probability density of selecting hyperedge $e$ at time $n$. 

Using the iterative relation of opinion states in equation \eqref{eq: agent-iterations}, we obtain
\begin{equation}
	\begin{aligned}
       P(\bx, n+1 | ~ \mathcal{E}(n) = e) =&  \int_{d_p(\by_e)<c} P(\by, n) \delta(\bxe-\overline{\by}_e)\delta(\bxte-\byte) \, \mathrm{d}\by \\
       &\!+ \int_{d_p(\by_e)\ge c} P(\by, n) \delta(\bx-\by) \, \mathrm{d}\by\,.
\end{aligned}
\end{equation}
A direct computation yields
\begin{equation}
    \begin{aligned}
           P(\bx, n+1)-P(\bx, n) &= \sum_{e\in E} p_e \Bigg\{ \int_{d_p(\by_e)<c} \!\!\! P(\by, n) \delta(\bxe-\overline{\by}_e)\delta(\bxte-\byte) \, \mathrm{d}\by \\
       &\qquad + \int_{d_p(\by_e)\ge c} \!\!\! P(\by, n) \delta(\bx-\by) \, \mathrm{d}\by - \int P(\by, n) \delta(\bx-\by) \, \mathrm{d}\by \Bigg\}\\
           &= \sum_{e\in E} p_e\int_{d_p(\by_e)<c} \!\!\! P(\bye,\bxte,n)\left[\delta(\bxe-\overline{\by}_e)-\delta(\bxe-\bye)\right] \, \mathrm{d}\bye\,,
    \end{aligned}
\end{equation}
which completes the proof.
\end{proof}


\subsection{Proof of Theorem \ref{thm: gequation}}\label{sec: thmgequation}

\begin{proof} 
A direct computation yields
\begin{equation}
{\small
	\begin{aligned}
    \left[\tau \right]{g}(a,t) &= \frac{\gamma}{N} \sum_{e\in E} p_e\int_{d_p(\by_e)<c} \!\!\!\!\!\!  f(\bye,\bxte,t) \left(\sum_{i=1}^N \delta(a-x_i)\right)\left[\delta(\bxe-\overline{\by}_e)-\delta(\bxe-\bye)\right] \mathrm{d}\bx \, \mathrm{d}\bye \\
    &= \frac{\gamma}{N} \sum_{e\in E} p_e \int_{d_p(\by_e)<c} \!\!\!\!\!\! f(\bye,\bxte,t) \left(\sum_{i\in e} \delta(a-x_i)\right)\left[\delta(\bxe-\overline{\by}_e)-\delta(\bxe-\bye)\right] \mathrm{d}\bx \, \mathrm{d}\bye \\
    &= \frac{\gamma}{N} \sum_{e\in E} p_e \int_{d_p(\by_e)<c}\!\!\!\!\!\!f(\bye,t) \left(\sum_{i\in e} \delta(a-x_i)\right)\left[\delta(\bxe-\overline{\by}_e)-\delta(\bxe-\bye)\right] \mathrm{d}\bxe \, \mathrm{d}\bye \\
    &= \frac{\gamma}{N} \sum_{e\in E} p_e \int_{d_p(\by_e)<c}\!\!\!\!\!\!f(\bye,t) \sum_{i\in e} \left[\delta(a-\overline{\by}_e)-\delta(a-y_i)\right] \mathrm{d}\bye \\
    &= \frac{\gamma}{N} \sum_{e\in E} p_e \int_{d_p(\by_e)<c}\!\!\!\!\!\!f(\bye,t) \left[|e|\delta(a-\overline{\by}_e)-\sum_{i\in e}\delta(a-y_i)\right] \mathrm{d}\bye \\
    &= \frac{\gamma}{N} \sum_{e\in E} p_e \int_{d_p(\by_e)<c}\!\!\!\!\!\!\Pi_{i\in e} g(y_i,t)\left[|e|\delta(a-\overline{\by}_e)-\sum_{i\in e}\delta(a-y_i)\right] \mathrm{d}\bye\,.
\end{aligned}
}
\end{equation}
\quad
\end{proof}


\subsection{A strategy to prove Conjecture \ref{thm: conj}}\label{sec: thmconj}
We believe that it is possible to rigorously prove Conjecture \ref{thm: conj} using approaches that are similar to those in \cite{graham1997stochastic, gomez2012bounded}. 
In this subsection, we connect our work to theirs by presenting a proof strategy for Conjecture \ref{thm: conj} using our paper's language.

Instead of using a one-agent density function $g(a,t)$ to describe a population of agents, References~\cite{graham1997stochastic, gomez2012bounded} characterized the collective behavior of agents using an empirical measure
\begin{equation}
    \Lambda^N = \frac{1}{N}\sum_{i=1}^N\delta_{X_i^N}
\end{equation}
that is induced by a Markov process.
References~\cite{gomez2012bounded,graham1997stochastic} formulated the independence assumption that we use in Theorem \ref{thm: conj} as the problem of \emph{propagation of chaos}, which originates from the study of equivalent interacting particles in statistical physics {\cite{kac1959probability,chaintron2022propagation}}.
For the propagation of chaos, each particle in any finite set of particles yields an i.i.d.~nonlinear Markov process in the mean-field limit \cite{kac1959probability}.

A similar convergence analysis to \eqref{eq: error} was discussed in Theorem 4.3 of \cite{gomez2012bounded} and in Theorem 3.1 of \cite{graham1997stochastic}. One of the main assumptions in those papers is that the size of any set of agents that one is studying (e.g., by examining their joint distribution) must be finite and does not grow as one increases the system size $N$. This assumption corresponds to the constraint in Conjecture \ref{thm: conj} that the maximum hyperedge size $M$ is finite. 


\subsection{Proof of Theorem \ref{thm: convergence}}\label{sec: convergence}

To prove Theorem \ref{thm: convergence}, we start with a lemma.

\begin{lemma}
\label{thm: isolationdistance}
Let $A=\{a_k\}_{k=1}^K$ be a set of discrete values, where $K$ can either be finite or infinite.
Let 
\begin{equation}     \label{eq: XA}
    X_A := \{ \bxe \in A^{|e|} : d_p(\bxe) > 0\} 
\end{equation}
be a set of vectors. It then follows that
\begin{equation} \label{eq: minimizer}
    \min_{\bxe \in X_A} d_p(\bxe) = d_p((A_{\min},0,\ldots,0)) \,,
\end{equation}
where $A_{\min}:=\min_{i\neq j} |a_i - a_j|$.
Up to a permutation, the minimizer of \eqref{eq: minimizer} is of the form
\begin{equation}
    (a_i,a_j,\ldots,a_j) = \mathrm{arg\,min}_{\bxe \in X_A} d_p(\bxe)\,,
\end{equation}
where $|a_i-a_j|=A_{\min}$.
\end{lemma}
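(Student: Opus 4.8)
The plan is to prove the two inequalities $\min_{\bxe\in X_A}d_p(\bxe)\le d_p((A_{\min},0,\ldots,0))$ and $\min_{\bxe\in X_A}d_p(\bxe)\ge d_p((A_{\min},0,\ldots,0))$ separately, reading off the minimizer along the way. Throughout I write $n=|e|$ and use that $d_p$ is translation-invariant ($d_p(\bxe+c\mathbf{1})=d_p(\bxe)$) and positively homogeneous of degree one, so its value depends only on the differences of the entries. The upper bound (achievability) is immediate: by the definition of $A_{\min}$ there exist $a_i,a_j\in A$ with $|a_i-a_j|=A_{\min}$, so the vector $(a_i,a_j,\ldots,a_j)\in A^{n}$ is non-constant, lies in $X_A$, and by translation invariance satisfies $d_p((a_i,a_j,\ldots,a_j))=d_p((A_{\min},0,\ldots,0))$.

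For the reverse inequality I would bound $\sum_i|x_i-\overline{\bx}_e|^{p}$ from below for an arbitrary non-constant $\bxe\in A^{n}$ in two reduction steps. First, reduce to configurations that take only two (adjacent) values of $A$. The mechanism is that dispersion decreases under mean-preserving contraction: for $p\ge 1$ the map $t\mapsto|t|^{p}$ is convex, so $\bxe\mapsto\sum_i|x_i-\overline{\bx}_e|^{p}$ is Schur-convex on the centered vectors, and any Robin-Hood-type transfer that moves two entries toward each other while keeping the group mean $\overline{\bx}_e$ fixed does not increase it. Consolidating the entries onto the two elements of $A$ that straddle the mean then reduces $\bxe$ to a two-value configuration on adjacent values of $A$, at some gap $\delta\ge A_{\min}$, without increasing $d_p$.

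Second, I would optimize over the multiplicity of a two-value configuration. Placing the two values at $0$ and $\delta$ and letting $k$ be the number of entries at one of them, a direct computation gives
\begin{equation}
\sum_{i}\left|x_i-\overline{\bx}_e\right|^{p}=\frac{\delta^{p}}{n^{p}}\,\phi(k)\,,\qquad \phi(k):=k(n-k)^{p}+(n-k)k^{p}\,,\quad k\in\{1,\ldots,n-1\}\,.
\end{equation}
The function $\phi$ is symmetric under $k\mapsto n-k$, and the remaining point is the elementary inequality $\phi(k)\ge\phi(1)=(n-1)^{p}+(n-1)$, which I would verify by showing $\phi$ is increasing on $[1,n/2]$ (via its derivative or a term-by-term comparison). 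Combining this with $\delta\ge A_{\min}$ yields $\sum_i|x_i-\overline{\bx}_e|^{p}\ge \tfrac{A_{\min}^{p}}{n^{p}}\phi(1)=d_p((A_{\min},0,\ldots,0))^{p}/\alpha_p^{p}$, so $d_p(\bxe)\ge d_p((A_{\min},0,\ldots,0))$. The minimizing configuration has gap $A_{\min}$ and multiplicity $k=1$ (equivalently $k=n-1$), which is exactly $(A_{\min},0,\ldots,0)$ up to permutation. When $A$ is infinite, one first restricts to configurations with discordance at most $d_p((A_{\min},0,\ldots,0))$; this bounds their spread and hence limits the relevant entries to finitely many values of $A$, so the minimum is attained.

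The step I expect to be the main obstacle is the reduction to two values while respecting the discrete constraint $x_i\in A$. The idealized mean-preserving transfers are not in general realizable inside $A$, since the gaps of $A$ need not be equal, so a symmetric transfer may fail to land both moved entries in $A$; the group mean also shifts under non-symmetric discrete moves, which must be handled carefully through majorization of the centered vectors. This whole argument relies on the convexity of $t\mapsto|t|^{p}$ and hence on $p\ge 1$, the regime in which $d_p$ is a genuine norm. For $p<1$ the map is concave, concentration no longer decreases dispersion, and one must lean entirely on the discreteness of $A$; I expect this case to require a separate and more delicate treatment.
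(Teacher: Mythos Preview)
The paper does not actually prove this lemma: it is stated at the start of the appendix section on the convergence theorem and then invoked in Step~3 of that proof, but no argument for it is ever supplied. So there is nothing to compare your attempt against; you are already providing more than the authors do.

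Your upper bound is correct. Your suspicion that the $p<1$ case is problematic is more than borne out: the lemma as stated is \emph{false} there. Take $|e|=3$, $A=\{0,1,2\}$ (so $A_{\min}=1$), and $p=0.3$. The vector $(2,1,0)\in X_A$ has mean $1$ and $\sum_i|x_i-\overline{\bx}_e|^{p}=2$, whereas $(A_{\min},0,0)=(1,0,0)$ has mean $1/3$ and $\sum_i|x_i-\overline{\bx}_e|^{p}=(2/3)^{0.3}+2(1/3)^{0.3}\approx 2.32$; hence $d_{0.3}((2,1,0))<d_{0.3}((A_{\min},0,0))$. The same mechanism appears at $p=\tfrac12$ (a value the paper actually uses) once the hyperedge is larger: with $|e|=5$ and $A=\{0,1,2\}$, the vector $(2,1,1,1,0)$ gives $\sum_i|x_i-\overline{\bx}_e|^{1/2}=2$, while $(1,0,0,0,0)$ gives $(4/5)^{1/2}+4(1/5)^{1/2}\approx 2.68$. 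So the obstruction you anticipated for $p<1$ is not a technical artifact of your method; the claimed minimizer is wrong, and any correct version of the lemma needs at least the hypothesis $p\ge 1$.

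For $p\ge 1$ your two-step plan is natural, and the multiplicity step via $\phi(k)=k(n-k)^p+(n-k)k^p$ is fine. The gap you flag in the reduction step is real: mean-preserving Robin--Hood transfers do not stay in $A^{|e|}$, and the obvious continuous relaxation to $\{\bxe:\max_i x_i-\min_i x_i\ge A_{\min}\}$ is too loose even for $p=2$ (for $|e|=3$ its minimizer is $(1,\tfrac12,0)$, not $(1,0,0)$). So the $p\ge 1$ argument still needs a genuine idea to bridge the discrete constraint --- but that is a gap in the paper, not in your reading of it.
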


\medskip

We now prove Theorem \ref{thm: convergence}.

\begin{proof}
The proof of Theorem \ref{thm: convergence} has three steps: (1) verification of the existence of the limit (i.e., steady-state) distribution; (2) verification that the limit distribution takes the form of a sum of Dirac delta functions; and (3) verification that any two Dirac delta functions must differ by at least the minimum isolation distance $\triangle_{d_p}^*(c)$ in \eqref{eq: triangle}.

\medskip

\underline{\textbf{Step 1.}} We define a sequence of moment generating functions (MGFs) $M_t(s)$ that are associated with the time-dependent probability density $g(a,t)$ by
\begin{equation}
    M_t(s) := \int e^{sa}g(a,t) ~\mathrm{d}a\,.
\end{equation}
Because $e^{sa}$ is convex, we know by Lemma \ref{thm: convex} that $M_t(s)$ is nonincreasing with respect to time. By Theorem \ref{thm: nonnegativity}, we know that $g(a,t) \geq 0$, which implies that $M_t(s) \geq 0$ for all times $t$.
The Monotone Convergence Theorem then guarantees that the infinite-time limit of $M_t(s)$ always exists. We define
\begin{equation}
    M_{\infty}(s) := \lim_{t\rightarrow \infty} M_t(s)\,.
\end{equation}
Because of the unique correspondence between an MGF and a PDF, we know that the infinite-time limit $g_{\infty}(a)$ of $g(a,t)$ also exists. 

\medskip

\underline{\textbf{Step 2.}} As we know from Theorem \ref{thm: conservation}, the mean opinion value $\mu$ is constant with respect to time. 
We define the time-dependent variance of the probability density $g(a,t)$ to be
\begin{equation}
    V(t) := \int (a-\mu)^2g(a,t) \, \mathrm{d}a \,.
\end{equation}
For clarity, we define the shorthand notation $G(\bye,t):=\Pi_{i\in e} g(y_i,t)$. A direct computation yields
\begin{equation}
\begin{aligned}
    \frac{\partial}{\partial t}{V}(t) &= \sum_{e\in E} p_e \int_{d_p(\bye)<c} \!\!\!\!\!\!G(\bye,t) (a-\mu)^2  \left[ |e|\delta(a-\overline{\by}_e) - \sum_{i\in e}\delta(a-y_i)\right] \mathrm{d}a \, \mathrm{d}\bye\\
    &= \sum_{e\in E} p_e \int_{d_p(\bye)<c}\!\!\!\!\!\!G(\bye,t)\left[ |e|(\overline{\by}_e-\mu)^2 - \sum_{i\in e}(y_i-\mu)^2\right] \mathrm{d}\bye\\
    &= - \sum_{e\in E} \frac{p_e}{2|e|} \int_{d_p(\bye)<c}\!\!\!\!\!\!G(\bye,t)\sum_{i,j\in e} (y_i-y_j)^2 \, \mathrm{d}\bye\,.
\end{aligned}    
\label{eq: dotV}
\end{equation}
Because $V(t)$ converges to the variance of $g_{\infty}(a)$, we know that $\lim_{t\rightarrow\infty}\frac{\partial}{\partial t}V(t)=0$, which yields
\begin{equation}    \label{eq: integralg}
    \int_{d_p(\bye)<c} \Pi_{i\in e} g_{\infty}(y_i)\sum_{i,j\in e} (y_i-y_j)^2 ~\mathrm{d}\bye = 0
\end{equation}
for all hyperedges $e$.

Let $U$ denote the ``positive set" of $g_{\infty}$ (i.e., $U = \{a:g_{\infty}(a)>0\}$). If the Lebesgue measure of $U$ is positive, then we can find an interval $\Omega_c=(y^*, y^*+c)$ such that the overlap region $O$ of $U$ and $\Omega_{c}$ also has a positive Lebesgue measure. 
Furthermore, for any $\bye \in O^{|e|}\backslash \{\bye: y_i=y_j  \,\, \text{for all} 
i,j \in e\}$, we have
\begin{equation}
    \mathbb{1}_{d_p(\bye)<c} \Pi_{i\in e} g_{\infty}(y_i)\sum_{i,j\in e} (y_i-y_j)^2 > 0 \,,
\end{equation}
which implies that the integral in \eqref{eq: integralg} over $O^{|e|}\backslash \{\bye: y_i = y_j \, \, {\text{for all}} \,\,
i,j \in e\}$ is strictly positive. 
This contradicts equation \eqref{eq: integralg} and proves that $g_{\infty}$ has an ``empty" positive set. In other words, $U$ has a $0$ measure.
Additionally, $g_{\infty} \geq 0$ and integrates to $1$, so $g_{\infty}$ is a Dirac measure and we can write $g_{\infty}(a) = \sum_{k=1}^K m_k\delta(a - a_k)$, with $\sum_{k=1}^K m_k = 1$ because of conservation of mass (see Theorem \ref{thm: conservation}).

\medskip

\underline{\textbf{Step 3.}} 
Let $A$ be the discrete set $\{a_k\}_{k=1}^K$. We insert $g_{\infty}$ into \eqref{eq: integralg} and obtain
\begin{equation}
    \sum_{\by_a \in X_A} \mathbb{1}_{d_p(\by_a)<c}\Pi_{i=1}^{|e|}m_{k_i} \sum_{i,j=1}^{|e|}(a_{k_i}-a_{k_j})^2 = 0\,,
\end{equation}
where $\by_a=(a_{k_1},\ldots,a_{k_{|e|}})$ and $X_A$ are defined in \eqref{eq: XA}. Because $m_{k_i}\sum_{i,j=1}^{|e|}(a_{k_i} - a_{k_j})^2 > 0$ holds for all $\by_a \in X_A$, it follows then 
$\mathbb{1}_{d_p(\by_a) < c} = 0$ for all $\by_a \in X_A$. This is equivalent to the statement that $\min_{\by_a\in X_A}d_p(\by_a) \ge c$. 
By Lemma \ref{thm: isolationdistance}, we know that $\min_{\by_a\in X_A}d_p(\by_a) = d_p((A_{\min},0,\ldots,0))$, where $A_{\min}=\min_{i\neq j}|a_i-a_j|$. By the definition of the minimum isolation distance  $\triangle_{d_p}^*(c)$ in \eqref{eq: triangle}, we know that $d_p((\triangle_{d_p}^*(c),0,\ldots,0))=c$, which implies that
\begin{equation}
    d_p((A_{\min},0,\ldots,0)) \ge d_p((\triangle_{d_p}^*(c),0,\ldots,0)) \,.
\end{equation}
Consequently, $A_{\min} \ge \triangle_{d_p}^*(c)$.

When $g_{0}(a)$ has a finite support $B$, it is also true that $a_i \in B$. Because $|a_i - a_j| > \triangle_{d_p}^*(c)$, it follows that $K$ is finite and $K \le \floor{|B|/\triangle_{d_p}^*(c)}$. This completes the proof.

\quad
\end{proof}


\bibliographystyle{siamplain}
\bibliography{references}


\end{document}